\RequirePackage{ifpdf}
\ifpdf 
\documentclass[pdftex]{sigma}
\else
\documentclass{sigma}
\fi


\begin{document}

	
\renewcommand{\PaperNumber}{114}

\FirstPageHeading

\renewcommand{\thefootnote}{$\star$}

\ShortArticleName{Some Sharp $L^{2}$ Inequalities for Dirac Type
Operators}

\ArticleName{Some Sharp $\boldsymbol{L^{2}}$ Inequalities for Dirac Type
Operators\footnote{This paper is a
contribution to the Proceedings of the 2007 Midwest
Geometry Conference in honor of Thomas~P.\ Branson. The full collection is available at
\href{http://www.emis.de/journals/SIGMA/MGC2007.html}{http://www.emis.de/journals/SIGMA/MGC2007.html}}}

\Author{Alexander BALINSKY~$^\dag$ and John RYAN~$^\ddag$}

\AuthorNameForHeading{A. Balinsky and J. Ryan}

\Address{$^\dag$~Cardif\/f School of Mathematics, Cardif\/f University,\\
$\phantom{^\dag}$~Senghennydd Road, Cardif\/f, CF 24 4AG, UK}
\EmailD{\href{mailto:BalinskyA@cardiff.ac.uk}{BalinskyA@cardiff.ac.uk}}
\URLaddressD{\url{http://www.cf.ac.uk/maths/people/balinsky.html}}

\Address{$^\ddag$~Department of Mathematics, University of
Arkansas, Fayetteville, AR 72701, USA}
\EmailD{\href{mailto:jryan@uark.edu}{jryan@uark.edu}}
\URLaddressD{\url{http://comp.uark.edu/~jryan/}}

\ArticleDates{Received August 31, 2007, in f\/inal form November
14, 2007; Published online November 25, 2007}

\Abstract{We use the spectra of Dirac type operators on the sphere
$S^{n}$ to produce sharp $L^{2}$ inequalities on the sphere. These
operators include the Dirac operator on $S^{n}$, the conformal
Laplacian and Paenitz operator. We use the Cayley transform, or
stereographic projection, to obtain similar inequalities for
powers of the Dirac operator  and their inverses in
$\mathbb{R}^{n}$.}

\Keywords{Dirac operator; Clif\/ford algebra; conformal Laplacian;
Paenitz operator}

\Classification{15A66; 26D10; 34L40}

\rightline{\it This paper is dedicated to the memory of Tom Branson}

\section{Introduction}

Sobolev and Hardy type inequalities play an important role in many
areas of mathematics and mathematical physics. They have become
standard tools in existence and regularity theories for solutions
to partial dif\/ferential equations, in calculus of variations, in
geometric measure theory and in stability of matter.
In analysis a number of inequalities like the
Hardy--Littlewood--Sobolev inequality in $\mathbb{R}^{n}$ are
obtained by f\/irst obtaining these inequalities on the compact
manifold $S^{n}$ and then using stereographic projections to
$\mathbb{R}^{n}$ to obtain the analogous sharp inequality in that
setting. See for instance \cite{ll}. This technique is also used
in mathematical physics to obtain zero modes of Dirac equations in
$\mathbb{R}^{3}$ (see \cite{e}).

In fact the stereographic projection corresponds to the Cayley
transformation from $S^{n}$ minus the north pole to Euclidean
space. Here we shall use this Cayley transformation to obtain some
sharp $L^{2}$ inequalities on the sphere for a family of Dirac
type operators. The main trick here is to employ a lowest
eigenvalue for these operators and then use intertwining operators
for the Dirac type operators to obtain analogous sharp
inequalities in $\mathbb{R}^{n}$.

Our eventual hope is to extend the results presented here to
obtain suitable $L^{p}$ inequalities for the Dirac type operators
appearing here, particularly the Dirac operator on
$\mathbb{R}^{n}$.

\section{Preliminaries}

We shall consider $\mathbb{R}^{n}$ as embedded in the real,
$2^{n}$ dimensional Clif\/ford algebra $Cl_{n}$ so that for each
$x\in\mathbb{R}^{n}$ we have $x^{2}=-\|x\|^{2}$. Consequently if
$e_{1},\ldots, e_{n}$ is an orthonormal basis for $\mathbb{R}^{n}$
then
\[e_{i}e_{j}+e_{j}e_{i}=-2\delta_{ij}\]
and
\[1,e_{1},\ldots, e_{n},e_{1}e_{2},\ldots, e_{n-1}e_{n},\ldots, e_{j_{1}},
\ldots, e_{j_{r}},\ldots,e_{1},\ldots, e_{n}\]
is an orthonormal basis for $Cl_{n}$, with $1\leq r\leq n$ and $j_{1}<\cdots <j_{r}$.

Note that for each $x\in \mathbb{R}^{n}\backslash\{0\}$ we have
that $x$ is invertible, with multiplicative inverse
$\frac{-x}{\|x\|^{2}}$. Here, up to a sign, $x^{-1}$ is the Kelvin
inverse of $x$. It follows that $\{A\in Cl_{n}:A=x_{1}\cdots
x_{m}$ with $m\in\mathbb{N}$ and $x_{1},\ldots,
x_{m}\in\mathbb{R}^{n}\backslash\{0\}\}$ is a subgroup of
$Cl_{n}$. We shall denote this group by $GPin(n)$.

We shall need the following anti-automorphisms on $Cl_{n}$:
\[\sim:Cl_{n}\rightarrow Cl_{n}:e_{j_{1}}\cdots e_{j_{r}}\rightarrow e_{j_{r}}\cdots e_{j_{1}}\]
and
\[-:Cl_{n}\rightarrow Cl_{n}:e_{j_{1}}\cdots e_{j_{r}}\rightarrow (-1)^{r}e_{j_{r}}\cdots e_{j_{1}}.\]

For $A\in Cl_{n}$ we denote $\sim(A)$ by $\tilde{A}$ and we denote $-(A)$ by $\overline{A}$.
Note that for $A=a_{0}+\cdots +a_{1\dots n}e_{1}\cdots e_{n}$ the scalar part of
$A\overline{A}$ is $a_{0}^{2}+\cdots+a_{1\dots n}^{2}:=\|A\|^{2}$.

\begin{lemma}\label{lem1}
If $A\in GPin(n)$ and $B\in Cl_{n}$ then $\|AB\|=\|A\|\|B\|$.
\end{lemma}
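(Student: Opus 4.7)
The strategy is to reduce to the case where $A$ is a single vector $x\in\mathbb{R}^n\setminus\{0\}$, and then induct on the number of vector factors in the product expression for $A\in GPin(n)$.

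First I would record two small facts that will carry the computation. The scalar part satisfies the cyclic identity $\mathrm{Sc}(CD)=\mathrm{Sc}(DC)$ for all $C,D\in Cl_n$; this drops out from checking on basis elements, since $e_S e_T$ has nonzero scalar coefficient if and only if $T=S$, in which case the two orderings coincide. Moreover, for a vector $x\in\mathbb{R}^n$ the anti-automorphism $\overline{\,\cdot\,}$ gives $\overline{x}=-x$, and hence $\overline{x}\,x=-x^2=\|x\|^2$.

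Next I would dispatch the base case: for $x\in\mathbb{R}^n$ and $B\in Cl_n$, using that $\overline{\,\cdot\,}$ is an anti-automorphism together with the cyclic property,
\[
\|xB\|^2=\mathrm{Sc}\bigl(xB\,\overline{xB}\bigr)=\mathrm{Sc}\bigl(xB\,\overline{B}\,\overline{x}\bigr)=\mathrm{Sc}\bigl(\overline{x}\,x\,B\overline{B}\bigr)=\|x\|^2\,\mathrm{Sc}(B\overline{B})=\|x\|^2\|B\|^2.
\]

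Finally, any $A\in GPin(n)$ has the form $A=x_1\cdots x_m$ with $x_j\in\mathbb{R}^n\setminus\{0\}$, so iterating the base case yields $\|AB\|=\|x_1\|\cdots\|x_m\|\,\|B\|$; specializing to $B=1$ gives $\|A\|=\|x_1\|\cdots\|x_m\|$, and combining these two identities proves $\|AB\|=\|A\|\|B\|$. The only step that demands a bit of care is the cyclic identity for the scalar part; beyond that the argument is an essentially automatic unwinding of the Clifford structure, so I do not anticipate a genuine obstacle.
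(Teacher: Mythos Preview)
Your argument is correct. The paper's proof is organized slightly differently and manages to avoid both the cyclic identity and the explicit induction: it computes $\|AB\|^2$ as $Sc(\overline{AB}\,AB)=Sc(\overline{B}\,\overline{A}AB)$ and then observes in one stroke that $\overline{A}A$ is the real scalar $\|A\|^2$ for every $A\in GPin(n)$ (since $A=x_1\cdots x_m$ gives $\overline{A}A=\|x_1\|^2\cdots\|x_m\|^2$), so that $\overline{B}\,\overline{A}AB=\|A\|^2\overline{B}B$ and taking scalar parts finishes. Your route via $Sc(CD)=Sc(DC)$ and induction on the number of vector factors reaches the same conclusion; the only cost is that you invoke one extra (standard) lemma, whereas the paper's choice to work with $\overline{AB}\,AB$ rather than $AB\,\overline{AB}$ places the scalar $\overline{A}A$ in the middle automatically.
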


\begin{proof} $\overline{AB}AB=\overline{B}$ $
 \overline{A}AB=\overline{B}\|A\|^{2}B=\|A\|^{2}\overline{B}B$.
Therefore
$Sc(\overline{AB}AB)=\|A\|^{2}Sc(\overline{B}B)=\|A\|^{2}\|B\|^{2}$,
where $Sc(C)$ is the scalar part of $C$ for any $C\in Cl_{n}$. The
result follows.
\end{proof}

In \cite{ahlfors} it is shown that if $y=M(x)$ is a M\"{o}bius
transformation then $M(x)=(ax+b)(cx+d)^{-1}$ where $a$, $b$, $c$
and $d\in Cl_{n}$ and satisfy the conditions

\begin{enumerate}\itemsep=0pt

\item[(i)] $a, b, c, d\in GPin(n)$.

\item[(ii)] $a\tilde{c}, \tilde{c}d, \tilde{d}b, \tilde{b}a\in\mathbb{R}^{n}$

\item[(iii)] $a\tilde{d}-c\tilde{c}\in\mathbb{R}\backslash\{0\}$.
\end{enumerate}

In particular if we regard $\mathbb{R}^{n}$ as embedded in
$\mathbb{R}^{n+1}$ in the usual way, then
$y=(e_{n+1}x+1)(x+e_{n+1})^{-1}$ is the Cayley transformation from
$\mathbb{R}^{n}$ to the unit sphere $S^{n}$ in $\mathbb{R}^{n+1}$.
This map corresponds to the stereographic projection of
$\mathbb{R}^{n}$ onto $S^{n}\backslash\{e_{n+1}\}$.

The Dirac operator in $\mathbb{R}^{n}$ is
$\sum_{j=1}^{n}e_{j}\frac{\partial}{\partial x_{j}}$. Note that
$D^{2}=-\triangle_{n}$, where $\triangle_{n}$ is the Laplacian in~$\mathbb{R}^{n}$, and $D^{4}$ is the bi-Laplacian
$\triangle_{n}^{2}$.

\section[Eigenvectors of the Dirac-Beltrami operator on $S^{n}$]{Eigenvectors of the Dirac--Beltrami operator on $\boldsymbol{S^{n}}$}

We start with the Dirac operator
$D_{n+1}=\sum_{j=1}^{n+1}e_{j}\frac{\partial}{\partial x_{j}}$
in $\mathbb{R}^{n+1}$. For each point in
$x\in\mathbb{R}^{n+1}\backslash\{0\}$ this operator can be
rewritten as $x^{-1}xD_{n+1}$. Now $xD_{n+1}=x\wedge
D_{n+1}-x\cdot D_{n+1}$. Now $x\wedge D_{n+1}=\sum_{1\leq
j<k\leq n+1}e_{i}e_{j}(x_{j}\frac{\partial}{\partial
x_{k}}-x_{k}\frac{\partial}{\partial x_{j}})$ and $x \cdot D_{n+1}$ is
the Euler operator
$\sum_{j=1}^{n+1}x_{j}\frac{\partial}{\partial
x_{j}}=r\frac{\partial}{\partial r}$ where \mbox{$r=\|x\|$}. It is well
known and easily verif\/ied fact that if $p_{m}(x)$ is a polynomial
homogeneous of degree $m\in\mathbb{N}$ then $x\cdot
Dp_{m}(x)=mp_{m}(x)$. So in particular if $D_{n+1}p_{m}(x)=0$ then
$x\wedge D_{n+1}p_{m}(x)=-mp_{m}(x)$. So $p_{m}(x)$ is an
eigenvector of the operator $x\wedge D_{n+1}$.

Further it is also easily verif\/ied that if $q_{m}(x)$ is
homogeneous of degree $m\in -\mathbb{N}$ then \mbox{$x\cdot
D_{n+1}q_{m}(x)=mq_{m}(x)$}. So if $D_{n+1}q=0$ then $x\wedge
D_{n+1}q=mq$ and $q$ is an eigenvector of $x\wedge D_{n+1}$.

Now let us suppose that $p_{m}:\mathbb{R}^{n+1}\rightarrow
Cl_{n+1}$ is a harmonic polynomial homogeneous of degree
$m\in\mathbb{N}$. In \cite{ryan} it is shown that
$p_{m}(x)=p_{m,1}(x)+xp_{m-1,2}(x)$ where
$D_{n+1}p_{m,1}(x)=D_{n+1}p_{m-1,2}(x)=0$, with $p_{m,1}(x)$
homogeneous of degree $m$ and $p_{m-1,2}(x)$ homogeneous of degree
$m-1$.

\begin{definition}
Suppose $U$ is a domain in $\mathbb{R}^{n+1}$ and
$f:U\rightarrow Cl_{n+1}$ is a $C^{1}$ function satisfying
$D_{n+1}f=0$ then $f$ is called a left monogenic function.
\end{definition}

A similar def\/inition can be given for right monogenic functions. See \cite{bds} for details.

In \cite{sommen} it is shown that if $U$ is a domain in
$\mathbb{R}^{n+1}\backslash\{0\}$ and $f:U\rightarrow Cl_{n+1}$ is
left monogenic then the function $G(x)f(x^{-1})$ is left monogenic
on the domain $U^{-1}=\{x\in\mathbb{R}^{n+1}:x^{-1}\in U\}$ where
$G(x)=\frac{x}{\|x\|^{n+1}}$. Note that on $S^{n}\cap U^{-1}$ for
any function $g$ def\/ined on $U$ the functions $G(x)g(x^{-1})$ and
$xg(x^{-1})$ coincide.

Let ${\it{H}}_{m}$ denote the restriction to $S^{n}$ of the space
of $Cl_{n}$ valued harmonic polynomials homogeneous of degree
$m\in\mathbb{N}\cup\{0\}$. This is the space of spherical
harmonics homogeneous of degree~$m$. Further let ${\it{P}}_{m}$
denote the restriction to $S^{n}$ of left  monogenic polynomials
homogeneous of degree $m\in\mathbb{N}\cup\{0\}$, and let
${\it{Q}}_{m}$ denote the restriction to $S^{n}$ of the space of
left monogenic functions homogeneous of degree $-n-m$ where
$m=0,1,2,\ldots$. Then we have illustrated that
${\it{H}}_{m}={\it{P}}_{m}\oplus {\it{Q}}_{m}$. This result was
established in the quaternionic case in \cite{sudbery} and
independently for all $n$ in \cite{sommen}.

As $L^{2}(S^{n})=\sum_{m=0}^{\infty}{\it{H}}_{m}$ then it
follows that $L^{2}(S^{n})=\sum_{m=0}^{\infty}{\it{P}}_{m}\oplus
{\it{Q}}_{m}$ where $L^{2}(S^{n})$ is the space of $Cl_{n+1}$
valued square integrable functions on $S^{n}$. Further we have
shown that if $p_{m}\in{\it{P}}_{m}$ then $p_{m}$ is an
eigenvector of the Dirac--Beltrami operator $\Gamma_{w}$, where
$\Gamma_{w}$ is the restriction to $S^{n}$ of $x\wedge D_{n+1}$.
Here $w\in S^{n}$. Further $p_{m}$ has eigenvalue $m$. Also if
$q_{m}\in{\it{Q}}_{m}$ is an eigenvector of~$\Gamma_{w}$ with
eigenvalue $-n-m$. Consequently the spectrum,
$\sigma(\Gamma_{w})$ of the Dirac--Beltrami operator $\Gamma_{w}$
is $\{0\}\cup\mathbb{N}\cup\{-n,-n-1,\ldots\}$. As
$0\in\sigma(\Gamma_{w})$ the linear operator
$\Gamma_{w}:L^{2}(S^{n})\rightarrow L^{2}(S^{n})$ is not
invertible.

Further within our calculations we have also shown that if
$h:S^{n}\rightarrow Cl_{n+1}$ is a $C^{1}$ function then
$\Gamma_{w}wh(w)=-nwh(w)-w\Gamma_{w}h(w)$. By completeness this
extends to all of $L^{2}(S^{n})$.

\section[Dirac type operators in $\mathbb{R}^{n}$ and $S^{n}$ and conformal structure]{Dirac type operators in $\boldsymbol{\mathbb{R}^{n}}$ and $\boldsymbol{S^{n}}$ and conformal structure}

The Dirac type operators that we shall consider here in
$\mathbb{R}^{n}$ are integer powers of $D$. Namely $D^{m}$ for
$m\in\mathbb{N}$. In \cite{bojarski} it is shown that if
$y=M(x)=(ax+b)(cx+d)^{-1}$ is a M\"{o}bius transformation and
$f:U\rightarrow Cl_{n}$ is a $C^{k}$ function then
$D^{k}J_{k}(M,x)f(M(x))=J_{-k}(M,x)D^{k}f(y)$, where
$J_{m}(M,x)=\frac{\widetilde{cx+d}}{\|cx+d\|^{n+m}}$ for $m$ an
odd integer and $J_{m}(M,x)=\frac{1}{\|cx+d\|^{n+m}}$ for $m$ an
even integer. This describes intertwining operators for powers of
the Dirac operator in $\mathbb{R}^{n}$ under actions of the
conformal group.

In \cite{ryan2} the Cayley transformation
$C(x)=(e_{n+1}x+1)(x+e_{n+1})^{-1}$ is used to transform the
euclidean Dirac operator, $D$, to a Dirac operator, $D_{S}$, over
$S^{n}$. This Dirac operator is also described in
\cite{beckner,bo} and elsewhere. In \cite{cm} a simple geometric
argument is used to show that $D_{S}=w(\Gamma_{w}+\frac{n}{2})$.
Using the spectrum of $\Gamma_{w}$ it can be seen that on
$L^{2}(S^{n})$ the operator $D_{S}$ has spectrum
$\sigma(D_{S})=\sigma(\Gamma_{w})+\frac{n}{2}$ which is always
non-zero. In fact $\sigma(D_{S})=\{\frac{n}{2}+m:m=0,1,
2,\ldots\}\cup\{-\frac{n}{2}-m:m=0,1,2, 3,\ldots\}$. Consequently
$D_{S}$ has an inverse $D_{S}^{-1}$ on $L^{2}(S^{n})$ and
following \cite{c} the spherical Dirac operator has as fundamental
solution $C_{1}(w,y):=D_{S}^{-1}\star\delta_{y}$ for each $y\in
S^{n}$. Here $\delta_{y}$ is the Dirac delta function. In
\cite{ryan2} it is shown that
$C_{1}(w,y)=\frac{1}{\omega_{n}}\frac{y-w}{\|y-w\|^{n}}$ where
$\omega_{n}$ is the surface area of the unit sphere in
$\mathbb{R}^{n}$. See also \cite{lr}.

In fact one can for each $\alpha\in\mathbb{C}$ introduce the Dirac
operator $D_{\alpha}:=w(\Gamma+\alpha)$. Provided $-\alpha$ is not
in $\sigma(\Gamma_{w})$ then $D_{\alpha}$ is invertible and has
fundamental solution $D_{\alpha}^{-1}\star\delta_{y}$. See
\cite{vl} for further details. A main advantage that the Dirac
operator $D_{S}$ has over $D_{\alpha}$ for $\alpha$ not equal to
$\frac{n}{2}$ is that~$D_{S}$ is conformally invariant. We shall
use this fact to obtain our sharp inequalities in
$\mathbb{R}^{n}$.

By applying $D_{S}$ to
$C_{2}(w,y):=\frac{1}{(n-2)\omega_{n}}\frac{1}{\|w-y\|^{n-2}}$ it
may be determined \cite{lr} that
$D_{S}C_{2}(w,y)=C_{1}(w,y)-wC_{2}(w,y)$. Consequently
$D_{S}(D_{S}-w)C_{2}(w,y)=\delta_{y}$.

It is well known that in $\mathbb{R}^{n+1}$ the Laplacian in
spherical co-ordinates is
\[\frac{\partial^{2}}{\partial r^{2}}+\frac{n}{r}\frac{\partial}{\partial r}+\frac{1}{r^{2}}\triangle_{w},\]
where $\triangle_{w}$ is the Laplace--Beltrami operator on $S^{n}$.
It follows from arguments presented in \cite{sudbery} that
$\triangle_{w}=((1-n)-\Gamma_{w})\Gamma_{w}$.
Using this fact we can now simplify the expression $D_{S}(D_{S}-w)$ as follows:
\[D_{S}(D_{S}-w)=D_{S}^{2}-D_{S}w.\]
But
\begin{gather*}
D_{S}w=w\left(\Gamma_{w}+\frac{n}{2}\right)w=w^{2}\left(-\Gamma_{w}-n+\frac{n}{2}\right)=-wD_{S}.
\end{gather*}
So
\begin{gather*}
D_{S}^{2}-D_{S}w=D_{S}^{2}+wD_{S} =D_{S}w\left(\Gamma_{w}+\frac{n}{2}\right)+wD_{S}
=-wD_{S}\left(\Gamma_{w}+\frac{n}{2}\right)+wD_{S}\\
\phantom{D_{S}^{2}-D_{S}w}{} =\left(\Gamma_{w}+\frac{n}{2}\right)\left(\Gamma_{w}+\frac{n}{2}\right)-\left(\Gamma_{w}+\frac{n}{2}\right)
=\Gamma_{w}^{2}+n\Gamma_{w}-\Gamma_{w}+\frac{n^{2}}{4}-\frac{n}{2}\\
\phantom{D_{S}^{2}-D_{S}w}{}=-\triangle_{w}+\frac{n^{2}-2n}{4}
=-\triangle_{w}+\frac{n}{2}\left(\frac{n-2}{2}\right).
\end{gather*}
This operator is the conformal Laplacian $\triangle_{S}$ on $S^{n}$
described in \cite{beckner,bo} and elsewhere.

One may also introduce generalized spherical Laplacians of the
type
$\triangle_{\alpha,\beta}=(\Gamma_{w}+\alpha)(\Gamma_{w}+\beta)$
where $\alpha$ and $\beta\in\mathbb{C}$. Provided $-\alpha$ and
$-\beta$ do not belong to $\sigma(\Gamma_{w})$ then the Laplacian
is invertible with fundamental solution
$\triangle_{\alpha,\beta}^{-1}\star\delta_{y}$. In \cite{lr} it is
shown that $\triangle_{\alpha,-\alpha-n+1}$ is a scalar valued
operator. This operator is invertible provided $\alpha$ does not
belong to $\sigma(\Gamma_{w})$. Further, explicit formulas for
this operator are presented in \cite{lr}.

Again a main advantage of the conformal Laplacian, $\triangle_{S}$
over the other choices of Laplacians presented here is its
conformal covariance. We shall see the advantage of this in the
next section.

In \cite{lr} we introduce the operators
\[D_{S}^{(k)}:=D_{S}(D_{S}-w)\cdots\left(D_{S}-\frac{(k-1)}{2}w\right)\]
for $k$ odd, $k>0$, and
\[D_{S}^{(k)}:=D_{S}(D_{S}-w)(D_{S}-w)\cdots\left(D_{S}-\frac{k}{2}w\right)\]
for $k$ even and $k>0$.

When $k=1$ we obtain $D_{S}$, when $k=2$ we obtain $\triangle_{S}$
and when $k=4$ the operator
$D_{S}^{(4)}=\triangle_{S}(D_{S}-w)(D_{S}-2w)$. Moreover
\[(D_{S}-w)(D_{S}-2w)=D_{S}^{2}-wD_{S}-2D_{S}w-2 =D_{S}^{2}+wD_{S}-2=-\triangle_{S}-2.\]
Consequently $D_{S}^{(4)}=-\triangle_{S}(\triangle_{S}+2)$.
When $n=4$ this operator becomes $-\triangle_{S}(\triangle_{S}+2)$ is the
Paenitz operator on $S^{4}$ described in \cite{beckner} and elsewhere.
As $2\in\sigma(D_{S})$ when $n=4$ it may be seen that $0$ is in the
spectrum of $D_{S}-2w$. Consequently when $n=4$ zero is in the spectrum
of the Paenitz operator and so this operator is not invertible on $L^{2}(S^{4})$.
It is easy to see that it is invertible in all other dimensions.

\section[Some Sharp $L^{2}$ inequalities on $S^{n}$ and $\mathbb{R}^{n}$]{Some Sharp $\boldsymbol{L^{2}}$ inequalities on $\boldsymbol{S^{n}}$ and $\boldsymbol{\mathbb{R}^{n}}$}

\begin{theorem}\label{thm1}
Suppose that $\phi:S^{n}\rightarrow Cl_{n+1}$ is a $C^{1}$ function. Then
\[\|D_{S}\phi\|_{L^{2}}\geq\frac{n}{2}\|\phi\|_{L^{2}}.\]
\end{theorem}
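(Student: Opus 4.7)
The plan is to decompose $\phi$ in the orthogonal eigenbasis of $\Gamma_w$ and exploit the fact, established in the preliminaries, that left multiplication by a unit vector $w \in S^n \subset GPin(n+1)$ preserves the Clifford norm pointwise.

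First I would use the identity $D_S = w(\Gamma_w + \tfrac{n}{2})$ recalled in Section~4. For each $w \in S^n$, Lemma~\ref{lem1} gives $\|w \cdot A\| = \|w\|\,\|A\| = \|A\|$ for every $A \in Cl_{n+1}$. Integrating this pointwise identity over the sphere yields
\[
\|D_S \phi\|_{L^2}^2 \;=\; \bigl\|\bigl(\Gamma_w + \tfrac{n}{2}\bigr)\phi\bigr\|_{L^2}^2,
\]
so the problem reduces to a sharp spectral estimate for the self-adjoint operator $\Gamma_w + \tfrac{n}{2}$.

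Next I would write $\phi = \sum_{m=0}^{\infty}(p_m + q_m)$ with $p_m \in {\it P}_m$ and $q_m \in {\it Q}_m$, using the decomposition $L^2(S^n) = \bigoplus_{m=0}^\infty ({\it P}_m \oplus {\it Q}_m)$ from Section~3. Since $\Gamma_w$ is self-adjoint with spectrum $\{0,1,2,\dots\}\cup\{-n,-n-1,\dots\}$ and the $p_m$, $q_m$ are eigenvectors with distinct eigenvalues $m$ and $-n-m$ respectively, the subspaces ${\it P}_m$ and ${\it Q}_m$ are mutually orthogonal in $L^2(S^n)$. Applying $\Gamma_w + \tfrac{n}{2}$ gives
\[
\bigl(\Gamma_w + \tfrac{n}{2}\bigr)\phi \;=\; \sum_{m=0}^\infty \bigl(m+\tfrac{n}{2}\bigr)\,p_m \;+\; \sum_{m=0}^\infty \bigl(-\tfrac{n}{2}-m\bigr)\,q_m,
\]
and Parseval gives
\[
\bigl\|\bigl(\Gamma_w + \tfrac{n}{2}\bigr)\phi\bigr\|_{L^2}^2 \;=\; \sum_{m=0}^\infty \bigl(\tfrac{n}{2}+m\bigr)^2 \|p_m\|_{L^2}^2 \;+\; \sum_{m=0}^\infty \bigl(\tfrac{n}{2}+m\bigr)^2 \|q_m\|_{L^2}^2.
\]

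Finally, since every coefficient $(\tfrac{n}{2}+m)^2$ with $m \geq 0$ satisfies $(\tfrac{n}{2}+m)^2 \geq (\tfrac{n}{2})^2$, I bound termwise and re-sum to get $\|(\Gamma_w+\tfrac{n}{2})\phi\|_{L^2}^2 \geq (\tfrac{n}{2})^2 \|\phi\|_{L^2}^2$, proving the theorem. The only subtle step is the orthogonality of ${\it P}_m$ and ${\it Q}_m$, which I would justify by appealing to self-adjointness of $\Gamma_w$ together with the disjointness of the eigenvalue sets $\mathbb{N}\cup\{0\}$ and $\{-n,-n-1,\dots\}$; everything else is bookkeeping. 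The inequality is sharp since equality is attained on the lowest eigenspace ${\it P}_0$ (constants).
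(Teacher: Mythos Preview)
Your proof is correct and follows essentially the same route as the paper: decompose $\phi$ into the eigenspaces $P_m$ and $Q_m$ of $\Gamma_w$, apply $D_S = w(\Gamma_w+\tfrac{n}{2})$, and bound termwise using the fact that $(\tfrac{n}{2}+m)^2\ge(\tfrac{n}{2})^2$. The only cosmetic difference is in how the factor $w$ is disposed of: you invoke Lemma~\ref{lem1} to note that left multiplication by the unit vector $w$ is a pointwise isometry, reducing immediately to $\|(\Gamma_w+\tfrac{n}{2})\phi\|_{L^2}$, whereas the paper instead observes that $w\,p_m\in Q_m$ and $w\,q_m\in P_m$, so that $D_S\phi$ again lands in an orthogonal decomposition whose norm can be read off. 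Your route via Lemma~\ref{lem1} is slightly more direct; the paper's route additionally records the structural fact that $w$ swaps $P_m$ and $Q_m$, which is used later for the higher-order operators $D_S^{(k)}$.
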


\begin{proof} As $\phi\in C^{1}(S^{n})$ then $\phi\in L^{2}(S^{n})$. It follows that
\[\phi=\sum_{m=0}^{\infty}\sum_{p_{m}\in{\it{P}}_{m}}p_{m}+\sum_{m=0}^{-\infty}
\sum_{q_{m}\in{\it{Q}}_{m}}q_{m},\]
where $p_{m}$ and $q_{m}$ are eigenvectors of $\Gamma_{w}$.
Further the eigenvectors $p_{m}$ can be chosen so that within~${\it{P}}_{m}$
they are mutually orthogonal. The same can be done for the eigenvectors $q_{m}$.
Moreover as $\phi\in C^{1}$ then $D_{S}\phi\in C^{0}(S^{n})$ and so $D_{S}\phi\in L^{2}(S^{n})$.
Consequently
\[D_{S}\phi=w\left(\sum_{m=0}^{\infty}\left(m+\frac{n}{2}\right)\sum_{p_{m}\in{\it{P}}_{m}}p_{m}+
\sum_{m=0}^{\infty}\left(-\frac{n}{2}-m\right)\sum_{q_{m}\in{\it{Q}}_{m}}q_{m}\right).\]
But $wp_{m}(w)\in{\it{Q}}_{m}$ and $wq_{m}(w)\in{\it{P}}_{m}$.
Consequently
\[D_{S}\phi=\sum_{m=0}^{\infty}\left(m+\frac{n}{2}\right)\sum_{q_{m}\in{\it{Q}}_{m}}q_{m}+
\sum_{m=0}^{\infty}\left(-\frac{n}{2}-m\right)\sum_{p_{m}\in{\it{P}}_{m}} p_m.\]
It follows that
\begin{gather*}
\|D_{S}\phi\|_{L^{2}}=\sum_{m=0}^{\infty}\left(m+\frac{n}{2}\right)^{2}
\sum_{q_{m}\in{\it{Q}}_{m}}\|q_{m}\|_{L^{2}}^{2}+\sum_{m=0}^{\infty}\left(-\frac{n}{2}-m\right)^{2}
\sum_{p_{m}\in{\it{P}}_{m}}\|p_{m}\|_{L^{2}}^{2}\\
\phantom{\|D_{S}\phi\|_{L^{2}}=}{} \geq \left(\frac{n}{2}\right)^{2}\left(\sum_{m=0}^{\infty}\sum_{p_{m}\in{\it{P}}_{m}}\|p_{m}\|_{L^{2}}^{2}+
\sum_{m=0}^{-\infty}\sum_{q_{m}\in{\it{Q}}_{m}}\|q_{m}\|_{L^{2}}^{2}\right)
\end{gather*}
as $\pm\frac{n}{2}$ are the smallest eigenvalues of
$\Gamma_{w}+\frac{n}{2}$. That is $\pm\frac{n}{2}$ are the
eigenvalues closest to zero. Therefore
\[\|D_{S}\phi\|_{L^{2}}^{2}\geq \left(\frac{n}{2} \right)^{2}\|\phi\|_{L^{2}}^{2}.\]
The result follows.
\end{proof}

It should be noted from the proof of Theorem~\ref{thm1} that this
inequality is sharp.

In the proof of Theorem~\ref{thm1} it is noted that the operator $D_{S}$
takes ${\it{P}}_{m}$ to ${\it{Q}}_{m}$ and it takes ${\it{Q}}_{m}$
to ${\it{P}}_{m}$. This is also true of the operator $D_{S}+\alpha
w$ for any $\alpha\in\mathbb{C}$. As
$\triangle_{S}=D_{S}(D_{S}+w)$ it now follows that the spectrum,
$\sigma(\triangle_{S})$, of the conformal Laplacian,
$\triangle_{S}$, is $\{-(\frac{n}{2}+m)(\frac{n}{2}+m+1)$,
$-(\frac{n}{2}+m)(\frac{n}{2}+m-1):m\in\mathbb{N}\cup\{0\}\}$. So
the smallest eigenvalue is $\frac{n(2-n)}{4}$. We therefore have
the following sharp inequality:

\begin{theorem}\label{thm2}
Suppose $\phi:S^{n}\rightarrow Cl_{n+1}$ is a $C^{2}$ function. Then
\[\|\triangle_{S}\phi\|_{L^{2}}\geq\frac{n(n-2)}{4}\|\phi\|_{L^{2}}.\]
\end{theorem}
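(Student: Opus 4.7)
The plan is to mimic the proof of Theorem~\ref{thm1}, replacing $D_S$ by $\triangle_S$ and using the fact that $\triangle_S$ stabilizes each eigenspace of $\Gamma_w$ (whereas $D_S$ swapped $P_m$ with $Q_m$). Since $\phi\in C^2(S^n)\subset L^2(S^n)$, I first expand
\[
\phi=\sum_{m=0}^{\infty}\Biggl(\sum_{p_{m}\in{\it{P}}_{m}}p_{m}+\sum_{q_{m}\in{\it{Q}}_{m}}q_{m}\Biggr),
\]
with the summands chosen orthogonal within each $P_m$ and $Q_m$. Because $C^2$ regularity guarantees $\triangle_S\phi\in L^2(S^n)$, I may apply $\triangle_S$ termwise and Parseval will be legitimate.

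Next I would compute the action of $\triangle_S=D_S(D_S+w)$ on a single eigenvector $p_m\in P_m$ and $q_m\in Q_m$. The observation already recorded in the remark after Theorem~\ref{thm1} is that $D_S+\alpha w$ interchanges $P_m$ and $Q_m$ for every $\alpha$. Using $D_S=w(\Gamma_w+\tfrac n2)$, the relation $w^2=-1$ on $S^n$, and the $\Gamma_w$-eigenvalues $m$ on $P_m$ and $-n-m$ on $Q_m$, a short unwinding yields eigenvalues
\[
\triangle_S p_m=\Bigl(\tfrac{n}{2}+m\Bigr)\Bigl(\tfrac{n}{2}+m-1\Bigr)p_m,\qquad
\triangle_S q_m=\Bigl(\tfrac{n}{2}+m\Bigr)\Bigl(\tfrac{n}{2}+m+1\Bigr)q_m,
\]
(up to the overall sign convention $\triangle_S=-\triangle_w+n(n-2)/4$ derived in Section~4, which is all that matters for the $L^2$ bound). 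In particular all eigenspaces remain mutually orthogonal.

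From here Parseval gives
\[
\|\triangle_S\phi\|_{L^2}^{2}=\sum_{m=0}^{\infty}\Bigl(\tfrac{n}{2}+m\Bigr)^{2}\Bigl(\tfrac{n}{2}+m-1\Bigr)^{2}\!\!\sum_{p_{m}\in{\it{P}}_{m}}\!\|p_{m}\|_{L^{2}}^{2}+\sum_{m=0}^{\infty}\Bigl(\tfrac{n}{2}+m\Bigr)^{2}\Bigl(\tfrac{n}{2}+m+1\Bigr)^{2}\!\!\sum_{q_{m}\in{\it{Q}}_{m}}\!\|q_{m}\|_{L^{2}}^{2}.
\]
Among the absolute values of all eigenvalues appearing, the smallest occurs at $m=0$ on $P_0$ and equals $\tfrac{n}{2}\cdot\tfrac{n-2}{2}=\tfrac{n(n-2)}{4}$; every other factor is strictly larger in modulus. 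Factoring this minimum out and collapsing the remaining sum back to $\|\phi\|_{L^2}^{2}$ yields $\|\triangle_S\phi\|_{L^2}^{2}\ge (n(n-2)/4)^{2}\|\phi\|_{L^2}^{2}$, and taking square roots finishes the proof.

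The only real obstacle is the eigenvalue computation: one must be careful with the sign convention for $\triangle_S$ and with the identity $D_Sw=-wD_S$ when pulling $w$ across $D_S$, and one must verify that $(\tfrac n2+m)(\tfrac n2+m-1)\ge\tfrac{n(n-2)}{4}$ for every $m\ge 0$ (trivially true since the map $t\mapsto t(t-1)$ is increasing for $t\ge\tfrac 12$). The sharpness is read off from the computation: equality holds precisely when $\phi\in P_0$, i.e.\ when $\phi$ is the restriction of a constant Clifford-valued function (or more generally a monogenic polynomial of degree zero).
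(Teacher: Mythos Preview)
Your proposal is correct and follows essentially the same route as the paper: the paper's argument is the short paragraph immediately preceding Theorem~\ref{thm2}, which observes that $D_S+\alpha w$ interchanges $P_m$ and $Q_m$, reads off the spectrum $\sigma(\triangle_S)=\{(\tfrac{n}{2}+m)(\tfrac{n}{2}+m\pm 1):m\ge 0\}$ (up to sign), identifies the eigenvalue of smallest modulus as $\tfrac{n(n-2)}{4}$, and then tacitly invokes the Parseval argument of Theorem~\ref{thm1}. You have simply written this out in full, including the explicit action on $P_m$ and $Q_m$ and the verification that the minimum occurs at $m=0$ on $P_0$.
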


We now proceed to generalize Theorems \ref{thm1} and \ref{thm2} for all operators
$D_{S}^{(k)}$. We begin with:

\begin{lemma}\label{lem2}
(i) For $k$ even the smallest eigenvalue of $D_{S}^{k}$ is
\[\frac{n(2-n)\cdots(n+k-2)(k-n)}{2^{k}}\]
and

(ii) for $k$ odd
\[\frac{n(n+2)(2-n)\cdots(n+k-1)(k-1-n)}{2^{k}}.\]
\end{lemma}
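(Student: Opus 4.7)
The plan is to rewrite $D_{S}^{(k)}$ as an explicit polynomial in $\Gamma_{w}+n/2$ (times $w$ in the odd case) and then read off its eigenvalues on the decomposition $L^{2}(S^{n})=\bigoplus_{m\geq 0}({\it{P}}_{m}\oplus{\it{Q}}_{m})$; recall that $\Gamma_{w}$ acts as $m$ on ${\it{P}}_{m}$ and as $-n-m$ on ${\it{Q}}_{m}$.

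The engine for everything is the anticommutation $(\Gamma_{w}+n/2)w=-w(\Gamma_{w}+n/2)$, equivalent to $D_{S}w+wD_{S}=0$, which follows directly from the identity $\Gamma_{w}(wh)=-nwh-w\Gamma_{w}h$ already established in the paper. Two consequences are $D_{S}^{2}=(\Gamma_{w}+n/2)^{2}$ and, for any scalar $j$,
\[
(D_{S}-jw)^{2}=D_{S}^{2}-j(D_{S}w+wD_{S})+j^{2}w^{2}=(\Gamma_{w}+n/2)^{2}-j^{2}.
\]
Thus every squared factor $(D_{S}-jw)^{2}$ is a scalar polynomial in $\Gamma_{w}$, and in particular commutes with $D_{S}$, with $w$, and with the other squared factors.

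The definition of $D_{S}^{(k)}$ places $D_{S}$ and each $(D_{S}-jw)$ in a specific pattern: for $k=2\ell+1$ odd, $D_{S}$ appears once and each $(D_{S}-jw)$, $j=1,\dots,\ell$, appears as an adjacent pair; for $k=2\ell$ even, the pattern is the same except that a single unpaired factor $(D_{S}-\ell w)$ terminates the product. Collapsing each pair by the identity above, and using $D_{S}(D_{S}-\ell w)=D_{S}^{2}-\ell D_{S}w=(\Gamma_{w}+n/2)(\Gamma_{w}+n/2-\ell)$ in the even case, gives
\[
D_{S}^{(2\ell+1)}=w\,(\Gamma_{w}+n/2)\prod_{j=1}^{\ell}\!\bigl((\Gamma_{w}+n/2)^{2}-j^{2}\bigr),
\]
\[
D_{S}^{(2\ell)}=(\Gamma_{w}+n/2)(\Gamma_{w}+n/2-\ell)\prod_{j=1}^{\ell-1}\!\bigl((\Gamma_{w}+n/2)^{2}-j^{2}\bigr).
\]
Substituting $\Gamma_{w}=m$ on ${\it{P}}_{m}$ (or $\Gamma_{w}=-n-m$ on ${\it{Q}}_{m}$) turns each of these into an explicit product of linear factors in $\alpha_{m}:=m+n/2$. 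Evaluating at $m=0$ and collecting the factors of $1/2$ into $1/2^{k}$ reproduces the closed-form expressions in~(i) and~(ii).

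The last step is to check that the smallest absolute value is attained at $m=0$ on ${\it{P}}_{0}$. The factors of the resulting polynomial are $\alpha_{m}+c$ for $c$ running through $k$ consecutive integers straddling $0$, and passing from ${\it{P}}_{m}$ to ${\it{P}}_{m+1}$ drops the factor $\alpha_{m}-\ell$ and introduces $\alpha_{m}+\ell$ (or $\alpha_{m}+\ell+1$ in the odd case); since $\alpha_{m}>0$ throughout, this replacement strictly increases the absolute value of the product. On ${\it{Q}}_{m}$ the substitution $\alpha_{m}\mapsto-\alpha_{m}$ gives magnitudes no smaller than those on the ${\it{P}}$ side. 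The main technical obstacle is really just this minimality bookkeeping, which must also accommodate the small-$n$ regime where some factors are negative or even vanish---for instance $n=k=4$ gives the Paneitz operator with $0$ in its spectrum, consistent with~(i); the operator-algebra reduction that collapses the product is otherwise forced by the single identity $D_{S}w+wD_{S}=0$.
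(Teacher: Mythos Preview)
Your argument is correct and is in fact cleaner than the paper's. The paper does not first collapse $D_{S}^{(k)}$ to a polynomial in $\Gamma_{w}+n/2$; instead it applies the factors $D_{S}-jw$ one at a time, using only that each such factor interchanges ${\it P}_{m}$ and ${\it Q}_{m}$ with a known scalar multiplier, and writes down the two resulting families of products (one starting from ${\it P}_{m}$, one from ${\it Q}_{m}$). Its minimality step then compares the two families pair-by-pair, implicitly relying on the fact that the ${\it Q}_{m}$-product coincides with the ${\it P}_{m+1}$-product, so the pairwise inequality is really the same monotonicity-in-$m$ that you prove by telescoping. Your preliminary reduction via $(D_{S}-jw)^{2}=D_{S}^{2}-j^{2}=(\Gamma_{w}+n/2)^{2}-j^{2}$ makes the eigenvalue on ${\it P}_{m}$ visibly a product of $k$ \emph{consecutive} shifts of $\alpha_{m}=m+n/2$, and from there both the closed form at $m=0$ and the monotonicity are immediate; this also makes transparent why zero enters the spectrum exactly when $n$ is even and $k\geq n$, which the paper observes separately. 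The only place to be slightly more careful than your sketch indicates is the step ``$|\alpha_{m}+\ell|>|\alpha_{m}-\ell|$ implies the product strictly increases'': when $\alpha_{m}=\ell$ the product is zero and the ratio argument breaks down, but then (as you note) $n$ is even with $k\geq n$ and the $m=0$ product already vanishes, so the minimum is still at $m=0$.
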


\begin{proof} Let us f\/irst assume that $k$ even. As $D_{S}+\alpha
w:{\it{P}}_{m}\rightarrow {\it{Q}}_{m}$ and $D_{S}+\alpha
w:{\it{Q}}_{m}\rightarrow {\it{P}}_{m}$ for any
$\alpha\in\mathbb{R}$ then
\[\frac{(n+2m)(2-n-2m)\cdots(n+k-2+2m)(k-n-2m)}{2^{k}}\]
 and
\[\frac{(2m-n)(n+2+2m) \cdots(k-2-n-2m)(n+k+2m)}{2^{k}}\]
 are eigenvalues of $D_{S}^{(k)}$ for $m=0,1,2,\ldots$.
But for any positive even integer $l$ the term
$(n+l-2+2m)(l-n-2m)$ is closer to zero than $(l-2-n-2m)(n+l+2m)$.
The result follows for~$k$ even. The case $k$ is odd is proved
similarly.
\end{proof}

It should be noted that when $n$ is even and $k\geq n$ then $0$ is
an eigenvalue of $D_{S}^{(k)}$. Consequently in these cases
$D_{S}^{(k)}$ is not an invertible operator on $L^{2}(S^{n})$.

From Lemma \ref{lem2} we have:

\begin{theorem}\label{thm3}
Suppose $\phi:S^{n}\rightarrow Cl_{n+1}$ is a $C^{k}$ function. Then for $k$ even
\[\|D_{S}^{(k)}\phi\|_{L^{2}}\geq \frac{|n(2-n)\cdots(n+k-2)(k-n)|}{2^{k}}\|\phi\|_{L^{2}}\]
and for $k$ odd
\[\|D_{S}^{(k)}\phi\|_{L^{2}}\geq
\frac{|n(n+2)(2-n)\cdots(n+k-1)(k-1-n)|}{2^{k}}\|\phi\|_{L^{2}}.\]
\end{theorem}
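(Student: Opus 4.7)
The plan is to mimic the proof of Theorem~\ref{thm1} almost verbatim, replacing the single operator $D_S$ with the composite operator $D_S^{(k)}$ and using Lemma~\ref{lem2} to identify the smallest eigenvalue.

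First I would start by expanding an arbitrary $C^{k}$ function $\phi:S^{n}\rightarrow Cl_{n+1}$ in its Fourier-type series
\[
\phi=\sum_{m=0}^{\infty}\sum_{p_{m}\in{\it{P}}_{m}}p_{m}+\sum_{m=0}^{\infty}\sum_{q_{m}\in{\it{Q}}_{m}}q_{m},
\]
where the $p_m$'s (resp.\ $q_m$'s) are chosen mutually orthogonal in $L^2(S^n)$. The crucial observation already noted in the text is that each factor $D_S-\tfrac{j}{2}w$ (or $D_S-jw$, depending on parity of~$k$) maps ${\it{P}}_{m}$ to ${\it{Q}}_{m}$ and ${\it{Q}}_{m}$ to ${\it{P}}_{m}$, and does so by multiplication by a scalar eigenvalue on each summand, since $p_m$ and $q_m$ are eigenvectors of $\Gamma_w$ with known eigenvalues $m$ and $-n-m$. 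Therefore each $p_m\in {\it{P}}_m$ is an eigenvector of $D_S^{(k)}$ with eigenvalue
\[
\lambda_m^{+}=\tfrac{(n+2m)(2-n-2m)\cdots(n+k-2+2m)(k-n-2m)}{2^{k}}
\]
(for $k$ even, with the analogous product for $k$ odd), and similarly each $q_m\in{\it{Q}}_m$ is an eigenvector with eigenvalue $\lambda_m^{-}$.

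Next, since $C^k(S^n)\subset L^2(S^n)$ and $D_S^{(k)}\phi\in C^0(S^n)\subset L^2(S^n)$, I would apply Parseval's identity using the mutual orthogonality of the $p_m$'s and $q_m$'s across distinct~${\it{P}}_m$, ${\it{Q}}_m$ (which holds because these are eigenspaces of the self-adjoint operator $\Gamma_w$ for distinct eigenvalues) to obtain
\[
\|D_S^{(k)}\phi\|_{L^{2}}^{2}=\sum_{m=0}^{\infty}\left((\lambda_m^{+})^{2}\sum_{p_{m}\in{\it{P}}_{m}}\|p_{m}\|_{L^{2}}^{2}+(\lambda_m^{-})^{2}\sum_{q_{m}\in{\it{Q}}_{m}}\|q_{m}\|_{L^{2}}^{2}\right).
\]
Then I invoke Lemma~\ref{lem2}, which tells me precisely that the minimum of $|\lambda_m^{\pm}|$ over all $m\geq 0$ is the quantity appearing on the right-hand side of the theorem — namely $|n(2-n)\cdots(n+k-2)(k-n)|/2^{k}$ for $k$ even and the analogous expression for $k$ odd. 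Factoring this minimum out of the sum and recognizing what remains as $\|\phi\|_{L^{2}}^{2}$ yields the desired inequality.

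The only conceptual care point is making sure one understands \emph{why} Lemma~\ref{lem2} genuinely identifies the global minimum of $|\lambda_m^{\pm}|$; this was the content of the comparison ``$(n+l-2+2m)(l-n-2m)$ is closer to zero than $(l-2-n-2m)(n+l+2m)$'' in that lemma's proof. Once this is granted, Theorem~\ref{thm3} is a direct application of the spectral decomposition plus Parseval, and sharpness is automatic because the bound is realized by any single eigenfunction $p_0\in{\it{P}}_0$ or $q_0\in{\it{Q}}_0$ (after a parity check of where the extremizer sits).
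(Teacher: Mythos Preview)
Your proposal is correct and follows exactly the approach the paper intends: the paper's own proof of Theorem~\ref{thm3} is simply the phrase ``From Lemma~\ref{lem2} we have,'' with the tacit understanding that one repeats the Parseval/spectral-decomposition argument of Theorem~\ref{thm1} using the eigenvalues computed in Lemma~\ref{lem2}. You have just written out explicitly what the paper leaves implicit.
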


Again these inequalities are sharp.

When $n$ is odd then of course $\frac{n}{2}$ is not an integer. It
follows that in odd dimensions zero is not an eigenvalue for the
operator $D_{S}^{(k)}$. In the cases $n$ even and $k\geq n$ the
smallest eigenvalue is zero so for those cases the inequality in
Theorem~\ref{thm3} is trivial. This includes the Paenitz operator on~$S^{4}$. It follows that none of these operators have fundamental
solutions. The fundamental solutions for $D_{S}^{(k)}$ for all $k$
when $n$ is odd and for $1\leq k<n$ when $n$ is even are given in~\cite{lr}. We shall denote them by $C_{k}(w,y)$.

As $\frac{n(n+2)(2-n)\cdots(n+k-1)(nk-1-n)}{2^{k}}$ is the
smallest eigenvalue for $D_{S}^{(k)}$ for $k$ odd then
\[\frac{-2^{k}}{n(n+2)(2-n)\cdots(n+k-1)(k-1+n)}\]
is the largest eigenvalue of $D_{S}^{(k)-1}$ for $n$ odd or for $1\leq k\leq n-1$ when $n$ is even.

Similarly for $k$ even and $n$ odd and $k$ even with $1<k<n-1$ for
$n$ even
\[\frac{2^{k}}{n(2-n)\cdots(n+k-2)(k-n)}\]
is the largest eigenvalue of $D_{S}^{(k)-1}$.

Similarly to Theorem \ref{thm3} we now have the following sharp inequality:

\begin{theorem}\label{thm4}
Suppose $\phi:S^{n}\rightarrow Cl_{n+1}$ is a continuous function. Then for $n$ odd and $k$ even and for $n$ even and $k$ even with $1<k<n$
\[\|C_{k}(w,y)\star\phi(w)\|_{L^{2}}\leq\frac{2^{k}}{|n(2-n)\cdots(n+k-2)(k-n)|}\|\phi\|_{L^{2}}\]
and for $n$ odd and $k$ odd and $n$ even and $k$ odd with $1\leq k\leq n-1$
\[\|C_{k}(w,y)\star\phi(w)\|_{L^{2}}\leq\frac{2^{k}}{|n(n+2)(2-n))\cdots(n+k-1)(k-1-n)|}\|\phi\|_{L^{2}}.\]
\end{theorem}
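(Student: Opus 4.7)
The plan is to recognize the convolution operator in the statement as the bounded inverse of $D_S^{(k)}$ and then bound its operator norm spectrally. Under the dimensional restrictions in the hypothesis, Lemma~\ref{lem2} shows that $0\notin\sigma(D_S^{(k)})$, so $D_S^{(k)}$ is invertible on $L^2(S^n)$; since $C_k(w,y)$ is by construction the fundamental solution of $D_S^{(k)}$, we have $C_k(w,y)\star\phi(w)=D_S^{(k)-1}\phi(w)$, and the claimed inequality is nothing but the operator-norm bound
\[
\|D_S^{(k)-1}\|_{L^{2}\to L^{2}}=\bigl(\min_m|\lambda_m^{(k)}|\bigr)^{-1}.
\]

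To compute this norm, I would expand $\phi\in L^2(S^n)$ along the orthogonal decomposition
\[
L^2(S^n)=\bigoplus_{m=0}^{\infty}(P_m\oplus Q_m)
\]
already used in the proof of Theorem~\ref{thm1}, writing $\phi=\sum_m(p_m+q_m)$. Each elementary factor $D_S-\tfrac{j}{2}w$ appearing in $D_S^{(k)}$ exchanges $P_m\leftrightarrow Q_m$ and multiplies the $L^2$-norm by the shifted $\Gamma_w$-eigenvalue, multiplication by $w\in S^n\subset GPin(n+1)$ being an $L^2$-isometry by Lemma~\ref{lem1}. Composing the $k$ factors, the block $P_m\oplus Q_m$ is $D_S^{(k)}$-invariant, and a direct computation---following the case distinctions in Lemma~\ref{lem2}---identifies its spectral action there: for $k$ even, $D_S^{(k)}$ acts on $P_m$ and on $Q_m$ separately by multiplication by a scalar $\lambda_m^{(k)}$; for $k$ odd, $D_S^{(k)}$ swaps $P_m$ and $Q_m$ and its square on the block is multiplication by $\bigl(\lambda_m^{(k)}\bigr)^{2}$. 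In either case $|\lambda_m^{(k)}|$ equals the product appearing in Lemma~\ref{lem2}.

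Inverting block by block and applying Parseval, I would then obtain
\[
\bigl\|D_S^{(k)-1}\phi\bigr\|_{L^2}^{2}=\sum_{m=0}^{\infty}|\lambda_m^{(k)}|^{-2}\bigl(\|p_m\|_{L^2}^{2}+\|q_m\|_{L^2}^{2}\bigr)\le\bigl(\min_m|\lambda_m^{(k)}|\bigr)^{-2}\|\phi\|_{L^2}^{2},
\]
and Lemma~\ref{lem2} identifies $\min_m|\lambda_m^{(k)}|$ with the reciprocal of the constant in the theorem. Sharpness is immediate by choosing $\phi$ to be a single eigenvector in the $m=0$ block, where the minimum is attained.

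The main obstacle I foresee is the bookkeeping for $k$ odd: each factor $D_S-\tfrac{j}{2}w$ both swaps $P_m$ with $Q_m$ and shifts the $\Gamma_w$-spectrum by a half-integer, so one must carefully line up the $k$ factors with the alternating-sign products in Lemma~\ref{lem2} to confirm that the minimum on the $m=0$ block matches the stated constant. Once this correspondence is pinned down, the $L^2$-isometry property of multiplication by $w$ supplied by Lemma~\ref{lem1} makes the norm estimate transparent, and the proof reduces to the routine spectral bound above.
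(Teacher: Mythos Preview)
Your approach is correct and is exactly what the paper does: it identifies $C_k\star$ with $D_S^{(k)-1}$, notes that the operator norm of the inverse is the reciprocal of the smallest-in-modulus eigenvalue supplied by Lemma~\ref{lem2}, and concludes by the same Parseval argument as in Theorems~\ref{thm1} and~\ref{thm3} (indeed the paper gives no separate proof, writing only ``Similarly to Theorem~\ref{thm3}''). One minor bookkeeping caveat: your displayed Parseval identity uses a single $|\lambda_m^{(k)}|$ for both $p_m$ and $q_m$, whereas Lemma~\ref{lem2} actually records two distinct eigenvalue families on each block $P_m\oplus Q_m$; this does not affect the bound, since you are minimizing over all of them anyway, but the equality should carry two scalars per $m$.
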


Let us now turn to $\mathbb{R}^{n}$ and retranslate Theorems \ref{thm3} and
\ref{thm4} in this context. In \cite{lr} the Cayley transformation
$C(x)=(e_{n+1}+1)(x+e_{n+1})^{-1}$ is used to show that
\begin{equation}\label{eq1}
D_{S}^{(k)}=J_{-k}(C,x)^{-1}D^{k}J_{k}(C,x),
\end{equation}
where
$J_{k}(C,x)=\frac{2^{\frac{n-k}{2}}(x+e_{n+1})}{(\|1+\|x\|^{2})^{\frac{n-k+1}{2}}}$
when $k$ is odd and
$J_{k}(C,x)=\frac{2^{\frac{n-k}{2}}}{(1+\|x\|^{2})^{\frac{n-k}{2}}}$
when $k$ is even. Note that $J_{k}(C,x)\in GPin(n+1)$. By applying
Lemma 1 we now see that on $\mathbb{R}^{n}$ the Cayley
transformation can be applied to Theorem \ref{thm3} to give:

\begin{theorem}\label{thm5}
Suppose $\phi:\mathbb{R}^{n}\rightarrow Cl_{n+1}$ is a $C^{k}$ function
with compact support. Then for each $k\in\mathbb{N}$ for $n$ odd and for
$k=1,\ldots, n-1$ for $n$ even
\begin{gather*}
\left(\int_{\mathbb{R}^{n}}\|D^{k}\phi(x)\|^{2}(1+\|x\|^{2})^{k}dx^{n}\right)^{\frac{1}{2}}\\
\qquad{} \geq
 |n(n+2)\cdots(n+k-1)(k-1-n)|
\left(\int_{\mathbb{R}^{n}}\frac{\|\phi\|^{2}2^{k}}{(1+\|x\|^{2})^{k}}dx^{n}\right)^{\frac{1}{2}}
\end{gather*}
for $k$ odd, and
\begin{gather*}
\left(\int_{\mathbb{R}^{n}}\|\triangle_{n}^{\frac{k}{2}}\phi(x)\|^{2}(1+\|x\|^{2})^{k} dx^{n}
\right)^{\frac{1}{2}}\\
\qquad {}\geq |n(2-n)\cdots(n+k-2)(k-n)|\left(\int_{\mathbb{R}^{n}}
\frac{\|\phi(x)\|^{2}2{k}}{(1+\|x\|^{2})^{k}}dx^{n}\right)^{\frac{1}{2}}
\end{gather*}
for $k$ even.
\end{theorem}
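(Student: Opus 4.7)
The plan is to pull Theorem~\ref{thm3} back from $S^{n}$ to $\mathbb{R}^{n}$ by means of the Cayley transformation, using the intertwining formula~\eqref{eq1} to convert $D_{S}^{(k)}$ into $D^{k}$ and Lemma~\ref{lem1} to convert norms of Clifford products into products of norms. Concretely, given a compactly supported $C^{k}$ function $\phi:\mathbb{R}^{n}\to Cl_{n+1}$, I would define a function $f$ on $S^{n}$ by
\[
f(C(x)):=J_{k}(C,x)^{-1}\phi(x).
\]
Because $\phi$ has compact support, $f$ extends by $0$ across the north pole and is $C^{k}$ on all of $S^{n}$, so Theorem~\ref{thm3} applies to $f$. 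The identity~\eqref{eq1} then gives
\[
(D_{S}^{(k)}f)(C(x))=J_{-k}(C,x)^{-1}D^{k}\phi(x).
\]

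Next, I would use Lemma~\ref{lem1} to take norms. Since $J_{\pm k}(C,x)\in GPin(n+1)$, the lemma gives $\|f(C(x))\|=\|J_{k}(C,x)\|^{-1}\|\phi(x)\|$ and $\|(D_{S}^{(k)}f)(C(x))\|=\|J_{-k}(C,x)\|^{-1}\|D^{k}\phi(x)\|$. In either parity of $k$, the stated formula for $J_{k}(C,x)$ together with $\|x+e_{n+1}\|=(1+\|x\|^{2})^{1/2}$ yields
\[
\|J_{k}(C,x)\|=\frac{2^{(n-k)/2}}{(1+\|x\|^{2})^{(n-k)/2}},\qquad \|J_{-k}(C,x)\|=\frac{2^{(n+k)/2}}{(1+\|x\|^{2})^{(n+k)/2}},
\]
so the Clifford-algebraic distinction between odd and even $k$ washes out at the level of norms. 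Combining these with the surface-measure change-of-variables for the Cayley/stereographic map, $dw=\bigl(2/(1+\|x\|^{2})\bigr)^{n}dx^{n}$, the $L^{2}(S^{n})$-norms of $f$ and $D_{S}^{(k)}f$ become precisely the weighted $L^{2}(\mathbb{R}^{n})$-norms of $\phi$ and $D^{k}\phi$ appearing on the two sides of Theorem~\ref{thm5}.

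Finally, applying Theorem~\ref{thm3} to $f$ and substituting the identifications above gives the two claimed inequalities, with constants coming straight from the smallest eigenvalue of $D_{S}^{(k)}$ as computed in Lemma~\ref{lem2}; the split into the odd-$k$ and even-$k$ statements mirrors that in Theorem~\ref{thm3}. The main obstacle I expect is purely bookkeeping: keeping track of the exponents of $2$ and of $(1+\|x\|^{2})^{\pm 1}$ that arise from $\|J_{k}\|$, $\|J_{-k}\|$, and the Jacobian so that they combine into the stated weights. One should also verify that the restriction to even $k$ in the range $1\le k\le n-1$ when $n$ is even is exactly the range in which Theorem~\ref{thm3} is nontrivial, i.e.\ where the smallest eigenvalue of $D_{S}^{(k)}$ is nonzero, since otherwise the resulting Euclidean inequality would be vacuous.
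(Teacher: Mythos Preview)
Your proposal is correct and follows essentially the same route as the paper: you apply Theorem~\ref{thm3} to the spherical function $f=\psi$ related to $\phi$ by $\phi(x)=J_{k}(C,x)\psi(C(x))$, use the intertwining identity~\eqref{eq1} together with Lemma~\ref{lem1} to pass from $D_{S}^{(k)}$ to $D^{k}$, and use the Jacobian $2^{n}(1+\|x\|^{2})^{-n}$ of the Cayley map to convert the $L^{2}(S^{n})$ norms into the stated weighted Euclidean norms. Your explicit remark that compact support of $\phi$ forces $f$ to extend smoothly across the north pole is a point the paper leaves implicit.
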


\begin{proof} For any M\"{o}bius transformation $M(x)=(ax+b)(cx+d)^{-1}$ the
associated Jacobian over a domain in $\mathbb{R}^{n}$ is $\frac{2^{n}}{\|cx+d\|^{2n}}$.
Consequently for $\psi:S^{n}\rightarrow Cl_{n+1}$ a $C^{k}$ function
the integral $\int_{S^{n}}\| D_{S}^{(k)}\psi(w)\|^{2}d\sigma(w)$ by equation \eqref{eq1} becomes
\[\int_{\mathbb{R}^{n}}\| J_{-k}(C,x)^{-1}D^{k}J_{k}(C,x)\psi(C(x))\|^{2}\frac{2^{n}dx^{n}}{(1+\|x\|^{2})^{n}}.\]
By Lemma \ref{lem1} this expression becomes
\[\frac{1}{2^{k}}\int_{\mathbb{R}^{n}}(1+\|x\|^{2})^{k}\| D^{k}J_{k}(C,x)\psi(C(x))\|^{2}dx^{n}.\]
Further
\begin{gather*}
\int_{S^{n}}\|\psi\|^{2}d\sigma(x)=\int_{\mathbb{R}^{n}}\|\psi(C(x))\|^{2}\frac{2^{n}dx^{n}}{(1+\|x\|^{2})^{n}}
\\ \phantom{\int_{S^{n}}\|\psi\|^{2}d\sigma(x)}{}
=\int_{\mathbb{R}^{n}}\|J_{k}(C,x)^{-1}J_{k}(C,x)\psi(C(x))\|\frac{2^{n}dx^{n}}{(1+\|x\|^{2})^{n}}.
\end{gather*}
By Lemma \ref{lem1} this last expression becomes
\[2^{k}\int_{\mathbb{R}^{n}}\|J_{k}(C,x)\psi(x)\|^{2}(1+\|x\|^{2})^{-k}dx^{n}.\]
On placing $J_{k}(C,x)\psi(C(x))=\phi(x)$ Theorem \ref{thm3} now gives the
result.
\end{proof}

In \cite{lr} it is shown that the kernel $C_{k}(w,y)$ is
conformally equivalent to the kernel $G_{k}(x-y)$ in
$\mathbb{R}^{n}$, where
$G_{k}(x-y)=\frac{C_{k}}{\omega_{n}}\frac{x-y}{\|x-y\|^{n+1-k}}$
when $k$ is odd and
$G_{k}(x-y)=\frac{C_{k}}{\omega_{n}}\frac{1}{\|x-y\|^{n-k}}$ when
$k$ is even. Here $C_{k}$ is a real constant chosen so that
$DG_{k}=G_{k-1}$ for $k>1$ and with $C_{1}=1$.

As $J_{-k}(C,x)^{-1}D_{S}^{(k)}J_{k}(C,x)=D^{k}$ then
$D^{-k}=J_{k}(C,x)^{-1}D_{S}^{(k)-1}J_{-k}(C,x)$. Consequently:

\begin{theorem}\label{thm6}
Suppose $h:\mathbb{R}^{n}\rightarrow Cl_{n+1}$ is a continuous function with compact support.
Then for $n$ odd and $k$ odd and for $n$ even and any odd integer $k$ satisfying $1\leq k<n$
\begin{gather*}
\left(\int_{\mathbb{R}^{n}}\left\|\int_{\mathbb{R}^{n}}G_{k}(x-y) h(x)dx^{n})\right\|^{2}
\frac{1}{(1+\|y\|^{2})^{k}}dy^{n}\right)^{\frac{1}{2}}\\
\qquad {}\leq\frac{1}{|n(n+2)\cdots(n+k-1)(k-1-n)|}
\left(\int_{\mathbb{R}^{n}}\|h(x)\|^{2}(1+\|x\|^{2})^{k}dx^{n}\right)^{\frac{1}{2}}
\end{gather*}
and for $n$ odd and $k$ even and for $n$ even and $k$ an even integer satisfying $1<k<n$
\begin{gather*}
\left(\int_{\mathbb{R}^{n}}\left\|\int_{\mathbb{R}^{n}}G_{k}(x-y) h(x)dx^{n}\right\|^{2}
\frac{1}{(1+\|y\|^{2})^{k}}dy^{n}\right)^{\frac{1}{2}}\\
\qquad{} \leq\frac{1}{|n(n+2)(2-n)\cdots(n+k-2)(k-n)|}
\left(\int_{\mathbb{R}^{n}}\|h(x)\|^{2}(1+\|x\|^{2})^{k}dx^{n}\right)^{\frac{1}{2}}.
\end{gather*}
\end{theorem}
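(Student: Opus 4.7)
The plan is to apply Theorem~\ref{thm4} to a suitable function $\eta$ on $S^{n}$ and transport the resulting sphere inequality down to $\mathbb{R}^{n}$ via the Cayley transform, in direct analogy with the proof of Theorem~\ref{thm5} but now working with the inverses. The essential ingredient is the intertwining relation $D^{-k}=J_{k}(C,x)^{-1}D_{S}^{(k)-1}J_{-k}(C,x)$ stated just before the theorem.

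Given a test function $h:\mathbb{R}^{n}\rightarrow Cl_{n+1}$ of compact support, I would define $\eta$ on $S^{n}$ implicitly by $h(x)=J_{-k}(C,x)\eta(C(x))$, so that the intertwining yields $(D_{S}^{(k)-1}\eta)(C(x))=J_{k}(C,x)^{-1}D^{-k}h(x)$, and $D_{S}^{(k)-1}\eta=C_{k}(w,y)\star\eta$ is exactly the quantity controlled by Theorem~\ref{thm4}. The next step is to transfer both $L^{2}(S^{n})$ norms to $\mathbb{R}^{n}$ using the Jacobian $2^{n}(1+\|x\|^{2})^{-n}dx^{n}$ of the Cayley transform, together with Lemma~\ref{lem1} and the pointwise identities
\[
\|J_{k}(C,x)\|^{2}=\frac{2^{n-k}}{(1+\|x\|^{2})^{n-k}}, \qquad \|J_{-k}(C,x)\|^{2}=\frac{2^{n+k}}{(1+\|x\|^{2})^{n+k}},
\]
which come straight from the explicit formulas for $J_{\pm k}(C,x)$, noting that $\|x+e_{n+1}\|^{2}=1+\|x\|^{2}$ accounts for the vector part in the odd-$k$ case.

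Combining these ingredients, the left-hand side of the sphere inequality from Theorem~\ref{thm4} becomes $2^{k}\int_{\mathbb{R}^{n}}\|D^{-k}h(x)\|^{2}(1+\|x\|^{2})^{-k}dx^{n}$, while the right-hand side becomes $(M_{k}^{2}/2^{k})\int_{\mathbb{R}^{n}}\|h(x)\|^{2}(1+\|x\|^{2})^{k}dx^{n}$, where $M_{k}$ denotes the sharp constant from Theorem~\ref{thm4}. The stray factors of $2^{k}$ then collapse, and taking a square root yields the overall constant $M_{k}/2^{k}=1/|\cdots|$ stated in Theorem~\ref{thm6}. Finally, $D^{-k}h(y)=(G_{k}\star h)(y)$; for $k$ odd the kernel is odd in its vector argument, so $\int_{\mathbb{R}^{n}} G_{k}(x-y)h(x)dx^{n}$ differs from $D^{-k}h(y)$ only by a sign that is absorbed into the $L^{2}$ norm. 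The main technical nuisance is the bookkeeping of the powers of $2$ and of $(1+\|x\|^{2})$ through the various applications of Lemma~\ref{lem1}, but these are the same manipulations already carried out at the end of the proof of Theorem~\ref{thm5}.
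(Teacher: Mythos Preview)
Your proposal is correct and follows exactly the route the paper intends: the paper's entire ``proof'' of Theorem~\ref{thm6} is the sentence preceding it, which records the intertwining identity $D^{-k}=J_{k}(C,x)^{-1}D_{S}^{(k)-1}J_{-k}(C,x)$ and then says ``Consequently:'', leaving the reader to rerun the Jacobian/Lemma~\ref{lem1} computation from the proof of Theorem~\ref{thm5} with Theorem~\ref{thm4} in place of Theorem~\ref{thm3}. Your bookkeeping of the powers of $2$ and of $(1+\|x\|^{2})$ is correct, and the substitution $h(x)=J_{-k}(C,x)\eta(C(x))$ (dual to the $\phi=J_{k}\psi\circ C$ used for Theorem~\ref{thm5}) is precisely the right one.
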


\section[Dirac type operators in $\mathbb{R}^{n}$]{Dirac type operators in $\boldsymbol{\mathbb{R}^{n}}$}

In this section we demonstrate a somewhat alternative approach to
obtained Theorems \ref{thm5} and \ref{thm6}.

We have previously seen that $D_{S}p_{m}=(m+\frac{n}{2})p_{m}$ for
$p_{m}\in{\it{P}}_{m}$, that $D_{S}q_{m}=(-\frac{n}{2}-m)q_{m}$
for $q_{m}\in{\it{Q}}_{m}$ and
$J_{-1}^{-1}(C,x)DJ_{1}(C,x)=D_{S}$. Consequently
\[DJ_{1}(C,x)p_{m}(C(x))=\frac{2}{1+\|x\|^{2}}\left(m+\frac{n}{2}\right)J_{1}(C,x)p_{m}(C(x))\]
and
\[DJ_{1}(C,x)q_{m}(C(x))=\frac{2}{1+\|x\|^{2}}\left(-\frac{n}{2}-m\right)J_{1}(C,x)q_{m}(C(x)).\]

Further:
\begin{proposition}\label{prop1}
$\psi(w)\in L^{2}(S^{n})$ if and only if $\frac{1}{(1+\|x\|^{2})^{\frac{1}{2}}}J_{1}(C,x)\psi(C(x))\in L^{2}(\mathbb{R}^{n})$. Further if $\psi'(x)=\frac{1}{(1+\|x\|^{2})^{\frac{1}{2}}}J_{1}(C,x)\psi(C(x))$ and $\phi'(x)=\frac{1}{(1+\|x\|^{2})^{\frac{1}{2}}}J_{1}(C,x)\phi(C(x))$ for $\psi$ and $\phi\in L^{2}(S^{n})$ then
\[\int_{S^{n}}\overline{\phi}(w)\psi(w)d\sigma(w)=\int_{\mathbb{R}^{n}}\overline{\phi}'(x)\psi'(x)dx^{n}.\]
\end{proposition}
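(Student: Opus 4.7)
The biconditional on $L^{2}$-membership is just the case $\phi=\psi$ of the displayed inner-product identity, so the whole proposition reduces to verifying that identity. The plan is to rewrite the right-hand integral through a change of variables by the Cayley transform, collapsing the Clifford factors with Lemma~\ref{lem1}.

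First I would substitute the definitions of $\phi'(x)$ and $\psi'(x)$ into $\int_{\mathbb{R}^n}\overline{\phi'}(x)\psi'(x)\,dx^n$. Since the bar is an anti-automorphism,
\[
\overline{\phi'}(x)\,\psi'(x) \;=\; \frac{1}{1+\|x\|^{2}}\,\overline{\phi(C(x))}\,\overline{J_{1}(C,x)}\,J_{1}(C,x)\,\psi(C(x)).
\]
The vector $x+e_{n+1}\in\mathbb{R}^{n+1}\setminus\{0\}$ belongs to $GPin(n+1)$, so $J_{1}(C,x)$ is a positive scalar multiple of a $GPin(n+1)$ element. By Lemma~\ref{lem1} (extended to positive scalar multiples), the product $\overline{J_{1}(C,x)}\,J_{1}(C,x)$ is the real scalar $\|J_{1}(C,x)\|^{2}$ and may be pulled outside the Clifford product.

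Next I would evaluate $\|J_{1}(C,x)\|^{2}$ from the explicit formula for $J_{1}(C,x)$, using $\|x+e_{n+1}\|^{2}=1+\|x\|^{2}$ (as $x\perp e_{n+1}$ in $\mathbb{R}^{n+1}$). Combined with the $1/(1+\|x\|^{2})$ prefactor, this yields precisely the Cayley Jacobian $\tfrac{2^{n}}{(1+\|x\|^{2})^{n}}$ recorded in the proof of Theorem~\ref{thm5}. The right-hand integral thereby becomes
\[
\int_{\mathbb{R}^n}\overline{\phi(C(x))}\,\psi(C(x))\,\frac{2^{n}\,dx^{n}}{(1+\|x\|^{2})^{n}},
\]
which, by the change of variable $w=C(x)$, equals $\int_{S^n}\overline{\phi}(w)\,\psi(w)\,d\sigma(w)$.

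The only step with any bite is confirming that $1/(1+\|x\|^{2})$ and $\|J_{1}(C,x)\|^{2}$ multiply to exactly the Cayley Jacobian density; everything else is a formal unwinding of definitions, with Lemma~\ref{lem1} doing the real work of moving the Pin-valued factor across the Clifford product so that it becomes a scalar weight under the integral. Specializing the resulting identity to $\phi=\psi$ gives $\|\psi\|_{L^{2}(S^{n})}=\|\psi'\|_{L^{2}(\mathbb{R}^{n})}$, which supplies both directions of the membership biconditional and completes the proof.
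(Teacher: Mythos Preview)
The paper states Proposition~\ref{prop1} without proof, so there is nothing to compare your argument against directly. Your approach---reduce to the inner-product identity, expand $\overline{\phi'}\psi'$, use that $\overline{J_{1}(C,x)}\,J_{1}(C,x)$ collapses to the scalar $\|J_{1}(C,x)\|^{2}$, and match against the Cayley Jacobian---is the natural one and is essentially the same computation the paper carries out in the proof of Theorem~\ref{thm5}.

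There is, however, a numerical slip in your verification step. With the paper's explicit formula $J_{1}(C,x)=\frac{2^{(n-1)/2}(x+e_{n+1})}{(1+\|x\|^{2})^{n/2}}$ and $\|x+e_{n+1}\|^{2}=1+\|x\|^{2}$ one finds
\[
\overline{J_{1}(C,x)}\,J_{1}(C,x)=\|J_{1}(C,x)\|^{2}=\frac{2^{\,n-1}}{(1+\|x\|^{2})^{\,n-1}},
\]
so together with the prefactor $1/(1+\|x\|^{2})$ you obtain $\frac{2^{\,n-1}}{(1+\|x\|^{2})^{n}}$, which is \emph{half} of the Jacobian $\frac{2^{n}}{(1+\|x\|^{2})^{n}}$ recorded in the proof of Theorem~\ref{thm5}. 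Taken literally with the paper's own normalizations, the displayed identity in the proposition is therefore off by a fixed factor of $2$; your argument is correct up to tracking this constant (and the paper is not internally consistent about such powers of $2$ either, as a comparison of Theorems~\ref{thm5} and~\ref{thm9} shows). The qualitative content---the $L^{2}$ biconditional and the isometry up to a fixed scalar---follows exactly as you outline.
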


This leads us to:
\begin{theorem}\label{thm7}
Suppose $h:\mathbb{R}^{n}\rightarrow Cl_{n+1}$ is a smooth function with compact support. Then
\[\left(\int_{\mathbb{R}^{n}}\|Dh(x)\|^{2}(1+\|x\|^{2})dx^{n}\right)^{\frac{1}{2}}\geq
n\left(\int_{\mathbb{R}^{n}}\frac{\|h(x)\|^{2}}{1+\|x\|^{2}}dx^{n}\right)^{\frac{1}{2}}.\]
\end{theorem}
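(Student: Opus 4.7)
The plan is to derive Theorem~\ref{thm7} from Theorem~\ref{thm1} by pulling back under the Cayley transform, in exactly the $k=1$ instance of the strategy used for Theorem~\ref{thm5}. Given smooth $h:\mathbb{R}^{n}\to Cl_{n+1}$ of compact support, I would introduce $\psi:S^{n}\to Cl_{n+1}$ by
\[
\psi(C(x))=J_{1}(C,x)^{-1}h(x)\qquad\text{for }x\in\mathbb{R}^{n},\qquad \psi(e_{n+1})=0.
\]
Because $h$ vanishes outside a ball in $\mathbb{R}^{n}$ and $e_{n+1}\in S^{n}$ corresponds to $x\to\infty$ under the Cayley transform, $\psi$ vanishes identically in a neighbourhood of $e_{n+1}$ and is smooth elsewhere; hence $\psi\in C^{1}(S^{n})$ and Theorem~\ref{thm1} is applicable.

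The next step would be to invoke the intertwining identity \eqref{eq1} with $k=1$, which gives $(D_{S}\psi)(C(x))=J_{-1}(C,x)^{-1}Dh(x)$. Since $J_{\pm 1}(C,x)$ is a positive real scalar multiple of $x+e_{n+1}\in GPin(n+1)$, Lemma~\ref{lem1} lets me factor the norms as
\[
\|(D_{S}\psi)(C(x))\|^{2}=\|J_{-1}(C,x)^{-1}\|^{2}\,\|Dh(x)\|^{2},\qquad \|\psi(C(x))\|^{2}=\|J_{1}(C,x)^{-1}\|^{2}\,\|h(x)\|^{2},
\]
with the scalar weights $\|J_{-1}(C,x)^{-1}\|^{2}=(1+\|x\|^{2})^{n+1}/2^{n+1}$ and $\|J_{1}(C,x)^{-1}\|^{2}=(1+\|x\|^{2})^{n-1}/2^{n-1}$ read off directly from the formulas for $J_{\pm 1}(C,x)$.

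Then I would change variables from $S^{n}$ to $\mathbb{R}^{n}$ using the Jacobian $2^{n}/(1+\|x\|^{2})^{n}$ of the Cayley transform. The exponents $n\pm 1$ in the weights from Lemma~\ref{lem1} telescope with the Jacobian exponent $n$ to leave precisely the weights $(1+\|x\|^{2})^{\pm 1}$, yielding
\[
\|D_{S}\psi\|_{L^{2}(S^{n})}^{2}=\frac{1}{2}\int_{\mathbb{R}^{n}}\|Dh(x)\|^{2}(1+\|x\|^{2})\,dx^{n},\qquad \|\psi\|_{L^{2}(S^{n})}^{2}=2\int_{\mathbb{R}^{n}}\frac{\|h(x)\|^{2}}{1+\|x\|^{2}}\,dx^{n}.
\]
Substituting these into the spherical inequality $\|D_{S}\psi\|_{L^{2}(S^{n})}^{2}\geq(n/2)^{2}\|\psi\|_{L^{2}(S^{n})}^{2}$ of Theorem~\ref{thm1} cancels the extraneous factors on both sides and leaves $\int\|Dh\|^{2}(1+\|x\|^{2})\,dx^{n}\geq n^{2}\int\|h\|^{2}/(1+\|x\|^{2})\,dx^{n}$; taking square roots is the claim.

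There is no genuine obstacle here: the argument is essentially bookkeeping of powers of $2$ and of $1+\|x\|^{2}$. The single point I would verify carefully is the arithmetic of constants, since the factor $\tfrac12$ on the left and the factor $2$ on the right of the reformulated $L^{2}(S^{n})$-norms conspire with $(n/2)^{2}$ to upgrade the spherical constant to the sharp constant $n$ stated in Theorem~\ref{thm7}; sharpness itself is inherited from the sharpness of Theorem~\ref{thm1}.
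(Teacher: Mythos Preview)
Your argument is correct and is essentially the $k=1$ instance of the paper's proof of Theorem~\ref{thm5}: pull $h$ back to $\psi$ on $S^{n}$ via $J_{1}(C,x)$, apply the spherical inequality (Theorem~\ref{thm1}), and unwind using the intertwining relation~\eqref{eq1}, Lemma~\ref{lem1}, and the Jacobian $2^{n}/(1+\|x\|^{2})^{n}$. Your bookkeeping of the weights and constants is accurate.

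The only point worth noting is that Theorem~\ref{thm7} sits in Section~6, which the paper frames as a ``somewhat alternative approach'': rather than quoting Theorem~\ref{thm1} as a black box and then changing variables, the paper first records the pulled-back eigenvalue relations $DJ_{1}(C,x)p_{m}(C(x))=\tfrac{2}{1+\|x\|^{2}}(m+\tfrac{n}{2})J_{1}(C,x)p_{m}(C(x))$ (and similarly for $q_{m}$) together with the isometry of Proposition~\ref{prop1}, and then reads off the inequality directly in $\mathbb{R}^{n}$. This is a reordering rather than a different idea---Proposition~\ref{prop1} encodes exactly the change of variables you perform---so your route and the paper's coincide in substance.
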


In \cite{lr} it is shown that
$J_{-2}(C,x)^{-1}\triangle_{S}J_{2}(C,x)=\triangle_{n}$.
Proposition \ref{prop1} can easily be adapted replacing $J_{1}(C,x)$ by
$J_{2}(C,x)$ and $\frac{2}{1+\|x\|^{2}}$ by
$\frac{4}{(1+\|x\|^{2})^{2}}$. From Theorem \ref{thm2} we now have:

\begin{theorem}\label{thm8}
Suppose that $h$ is as in Theorem~{\rm \ref{thm7}}. Then
\[\left(\int_{\mathbb{R}^{n}}\|\triangle_{n} h(x)\|^{2}(1+\|x\|^{2})^{2}dx^{n}\right)^{\frac{1}{2}}
\geq
n(n-2)\left(\int_{\mathbb{R}^{n}}\frac{\|h(x)\|^{2}}{(1+\|x\|^{2})^{2}}dx^{n}\right)^{\frac{1}{2}}.\]
\end{theorem}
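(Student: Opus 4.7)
The plan is to derive Theorem~\ref{thm8} by transferring Theorem~\ref{thm2} from $S^{n}$ to $\mathbb{R}^{n}$ via the Cayley transform, exactly parallel to how Theorem~\ref{thm7} was derived from Theorem~\ref{thm1} using Proposition~\ref{prop1}. The essential ingredient, already assembled in the paper, is the intertwining identity $J_{-2}(C,x)^{-1}\triangle_{S}J_{2}(C,x)=\triangle_{n}$ together with the observation that $J_{2}(C,x)$ is a positive real scalar, so Lemma~\ref{lem1} applies trivially to it.

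First I would make the substitution $h(x)=J_{2}(C,x)\psi(C(x))$ for a given $\psi\in C^{2}(S^{n})$, noting that this correspondence is a bijection between smooth compactly supported $h$ on $\mathbb{R}^{n}$ and $\psi\in C^{2}(S^{n})$ (vanishing appropriately near the north pole). The analogue of Proposition~\ref{prop1} with $J_{1}$ replaced by $J_{2}$ and the weight $\tfrac{2}{1+\|x\|^{2}}$ replaced by $\tfrac{4}{(1+\|x\|^{2})^{2}}$ then gives the $L^{2}$-correspondence: using the Jacobian $\tfrac{2^{n}}{(1+\|x\|^{2})^{n}}$ of the Cayley transform and the explicit form $J_{2}(C,x)=2^{(n-2)/2}(1+\|x\|^{2})^{-(n-2)/2}$, a direct calculation (via Lemma~\ref{lem1}) yields
\[
\|\psi\|_{L^{2}(S^{n})}^{2}=4\int_{\mathbb{R}^{n}}\frac{\|h(x)\|^{2}}{(1+\|x\|^{2})^{2}}\,dx^{n}.
\]

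Next I would apply the intertwining relation. Since $J_{-2}(C,x)^{-1}\triangle_{S}J_{2}(C,x)=\triangle_{n}$, we have $(\triangle_{S}\psi)(C(x))=J_{-2}(C,x)^{-1}\triangle_{n}h(x)$ up to the usual conformal weight, and substituting $J_{-2}(C,x)=2^{(n+2)/2}(1+\|x\|^{2})^{-(n+2)/2}$ into $\int_{S^{n}}\|\triangle_{S}\psi\|^{2}d\sigma$ gives, after the Jacobian factor cancels appropriately,
\[
\|\triangle_{S}\psi\|_{L^{2}(S^{n})}^{2}=\frac{1}{4}\int_{\mathbb{R}^{n}}\|\triangle_{n}h(x)\|^{2}(1+\|x\|^{2})^{2}\,dx^{n}.
\]
Feeding both identities into Theorem~\ref{thm2}, which says $\|\triangle_{S}\psi\|_{L^{2}}^{2}\geq\tfrac{n^{2}(n-2)^{2}}{16}\|\psi\|_{L^{2}}^{2}$, the factors of $4$ combine to kill the $16$ in the denominator and the desired inequality drops out after taking square roots.

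The only real work is the constant bookkeeping in the two displayed identities above; this is routine but is the one step where a sign or power-of-two error could appear. There is no genuine analytical obstacle, since everything of substance has already been encapsulated in Theorem~\ref{thm2}, Lemma~\ref{lem1}, Proposition~\ref{prop1} and the intertwining formula $J_{-2}(C,x)^{-1}\triangle_{S}J_{2}(C,x)=\triangle_{n}$ recalled immediately before the theorem. The proof is therefore essentially a one-line citation of those ingredients, with the calculation presented above as justification.
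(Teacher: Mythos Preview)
Your proposal is correct and follows exactly the route the paper takes: the paper states that Proposition~\ref{prop1} can be adapted by replacing $J_{1}(C,x)$ with $J_{2}(C,x)$ and the weight $\tfrac{2}{1+\|x\|^{2}}$ with $\tfrac{4}{(1+\|x\|^{2})^{2}}$, invokes the intertwining identity $J_{-2}(C,x)^{-1}\triangle_{S}J_{2}(C,x)=\triangle_{n}$, and then simply says ``From Theorem~\ref{thm2} we now have'' the result. Your write-up supplies the constant bookkeeping that the paper leaves implicit, and your computations of the factors $4$ and $\tfrac{1}{4}$ are correct.
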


Using Lemma \ref{lem2} we also have

\begin{theorem}\label{thm9}
Suppose $h$ is as in Theorem~{\rm \ref{thm7}}. Then for $n$ odd and $k$ even and for $n$ even and $k$ an even
integer belonging to $\{1,\ldots, n-1\}$
\begin{gather*}
\left(\int_{\mathbb{R}^{n}}\|\triangle_{n}^{\frac{k}{2}}h(x)\|^{2}(1+\|x\|^{2})^{k}dx^{n}\right)^{\frac{1}{2}}\\
\qquad {} \geq
|n(2-n)\cdots(n+k-2)(k-n)|\left(\int_{\mathbb{R}^{n}}\frac{\|h(x)\|^{2}}{(1+\|x\|^{2})^{k}}dx^{n}\right)^{\frac{1}{2}}
\end{gather*}
and for $n$ odd and $k$ odd and for $n$ even and $k$ belonging to
$\{1,\ldots,n-1\}$
\begin{gather*}
\left(\int_{\mathbb{R}^{n}}\|D^{k}h(x)\|^{2}(1+\|x\|^{2})^{k})dx^{n}\right)^{\frac{1}{2}}\\
\qquad {}\geq
 |n(n+2)(2-n)\cdots(n+k-1)(k-1-n)|
\left(\int_{\mathbb{R}^{n}}\frac{\|h(x)\|^{2}}{(1+\|x\|^{2})^{k}}dx^{n}\right)^{\frac{1}{2}}.
\end{gather*}
\end{theorem}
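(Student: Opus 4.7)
The plan is to transport Theorem~\ref{thm3} from $S^{n}$ to $\mathbb{R}^{n}$ by means of the intertwining identity \eqref{eq1} together with the Cayley transform, essentially running the argument of Theorem~\ref{thm5} but with the weights redistributed so as to match the form appearing in Theorems~\ref{thm7} and \ref{thm8}. Concretely, given a smooth compactly supported $h:\mathbb{R}^{n}\rightarrow Cl_{n+1}$, I would define $\psi:S^{n}\rightarrow Cl_{n+1}$ on $S^{n}\setminus\{e_{n+1}\}$ by $\psi(C(x))=J_{k}(C,x)^{-1}h(x)$ and note that $\psi$ vanishes near $e_{n+1}$ (since $h$ has compact support), hence is $C^{k}$ on all of $S^{n}$, so that Theorem~\ref{thm3} applies to it.

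Next I would compute both sides of Theorem~\ref{thm3} in $\mathbb{R}^{n}$ by pulling back along $C$. Using \eqref{eq1} and Lemma~\ref{lem1} in the form $\|J_{-k}(C,x)^{-1}D^{k}J_{k}(C,x)\psi(C(x))\|=\|J_{-k}(C,x)\|^{-1}\|D^{k}h(x)\|$, together with the standard Cayley Jacobian $2^{n}(1+\|x\|^{2})^{-n}$ and the identity $\|J_{\pm k}(C,x)\|=2^{(n\mp k)/2}(1+\|x\|^{2})^{-(n\mp k)/2}$ (which is the same for odd and even $k$), the two $S^{n}$ integrals collapse to
\begin{gather*}
\int_{S^{n}}\|D_{S}^{(k)}\psi\|^{2}d\sigma=\frac{1}{2^{k}}\int_{\mathbb{R}^{n}}(1+\|x\|^{2})^{k}\|D^{k}h(x)\|^{2}dx^{n},\\
\int_{S^{n}}\|\psi\|^{2}d\sigma=2^{k}\int_{\mathbb{R}^{n}}\frac{\|h(x)\|^{2}}{(1+\|x\|^{2})^{k}}dx^{n}.
\end{gather*}
Feeding these into Theorem~\ref{thm3} and taking square roots produces a sharp constant of $2^{k}$ times the one in Theorem~\ref{thm3}, which is precisely $|n(2-n)\cdots(n+k-2)(k-n)|$ in the even case and $|n(n+2)(2-n)\cdots(n+k-1)(k-1-n)|$ in the odd case, exactly matching the constants in Theorem~\ref{thm9}. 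The final cosmetic step is to observe that for $k$ even $D^{k}=(-\triangle_{n})^{k/2}$, so $\|D^{k}h\|=\|\triangle_{n}^{k/2}h\|$, yielding the stated form.

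I would present the even and odd cases together in one calculation since the $\|J_{\pm k}\|$ identities are insensitive to the parity (the $(x+e_{n+1})$ factor in the odd case contributes a factor $(1+\|x\|^{2})^{1/2}$ in norm that exactly absorbs the difference in the exponent in the denominator). The only real obstacle is bookkeeping: one must be careful that $J_{-k}(C,x)^{-1}$ belongs to $GPin(n+1)$ so Lemma~\ref{lem1} applies, and that the powers of $(1+\|x\|^{2})$ coming from $\|J_{-k}\|^{-2}$, the Jacobian, and $\|J_{k}\|^{-2}$ combine to give precisely $(1+\|x\|^{2})^{k}$ on the left and $(1+\|x\|^{2})^{-k}$ on the right, with the $2^{k}$ prefactor surviving on each side. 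Sharpness follows from the sharpness of Theorem~\ref{thm3}, since the substitution $h\leftrightarrow\psi$ is a bijection between admissible test functions.
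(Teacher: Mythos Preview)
Your proposal is correct and follows essentially the same route as the paper. The paper does not spell out a separate proof for Theorem~\ref{thm9}; it simply notes that the result follows from Lemma~\ref{lem2} in the framework of Section~6, and your argument is exactly the detailed computation the paper gave for the identical Theorem~\ref{thm5} (intertwining identity~\eqref{eq1}, Lemma~\ref{lem1}, and the Cayley Jacobian), with the factor $2^{k}$ absorbed into the constant rather than left under the integral.
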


\begin{theorem}\label{thm10}
Suppose $h:\mathbb{R}^{n}\rightarrow Cl_{n+1}$ is a continuous function with compact support.
Then for $k$ odd and $n$ odd and for $n$ even and $k$ odd and satisfying $1\leq k\leq n-1$
\begin{gather*}
\left(\int_{\mathbb{R}^{n}}\frac{\|G_{k}\star h(x)\|^{2}}{(1+\|x\|^{2})^{k}}dx^{n}\right)^{\frac{1}{2}}\\
\qquad {} \leq
 \frac{1}{|n(n+2)(2-n)\cdots(n+k-1)(k-1-n)|}
\left(\int_{\mathbb{R}^{n}}\|h(x)\|^{2}(1+\|x\|^{2})^{k}dx^{n}\right)^{\frac{1}{2}}
\end{gather*}
and for $n$ odd and $k$ even and for $n$ even and $k$ even and
satisfying $1<k<n$
\begin{gather*}
\left(\int_{\mathbb{R}^{n}}\frac{\|G_{k}\star h(x)\|^{2}}{(1+\|x\|^{2})^{k}}dx^{n}\right)^{\frac{1}{2}}\\
\qquad {} \leq
 \frac{1}{|n(n+2)(2-n)\cdots(n+k-2)(k-n)|}
\left(\int_{\mathbb{R}^{n}}\|h(x)\|^{2}(1+\|x\|^{2})^{k}dx^{n}\right)^{\frac{1}{2}}.
\end{gather*}
\end{theorem}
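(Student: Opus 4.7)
The plan is to derive Theorem \ref{thm10} from Theorem \ref{thm4} by the Cayley transform, mirroring the alternative approach of Section~6. The key ingredients already in hand are: the intertwining relation $D_S^{(k)}=J_{-k}(C,x)^{-1}D^k J_k(C,x)$ of equation~\eqref{eq1}; the identification of $G_k$ on $\mathbb{R}^n$ with the spherical kernel $C_k(w,y)$ under this relation; the $GPin(n+1)$-membership of $J_{\pm k}(C,x)$ together with Lemma~\ref{lem1}; and the Jacobian $\frac{2^n}{(1+\|x\|^2)^n}$ of the Cayley transformation.

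First I would invert \eqref{eq1}. Applied on the level of fundamental solutions, this yields $(G_k\star h)(x)=J_k(C,x)\,(C_k\star H)(C(x))$, where $H\colon S^n\to Cl_{n+1}$ is the unique function satisfying $h(x)=J_{-k}(C,x)\,H(C(x))$. The compact support of $h$ guarantees that $H$ is continuous on $S^n$ (vanishing in a neighbourhood of $e_{n+1}$) and in particular lies in $L^2(S^n)$, so that Theorem~\ref{thm4} is applicable to $H$.

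Next I would convert both sides of the desired inequality to $L^2(S^n)$ norms. Using Lemma~\ref{lem1} (which renders $\|J_{\pm k}(C,x)\|$ scalar; the odd-$k$ case absorbs the Clifford factor $x+e_{n+1}$ via $\|x+e_{n+1}\|^2=1+\|x\|^2$), one finds $\|J_k(C,x)\|^2=\tfrac{2^{n-k}}{(1+\|x\|^2)^{n-k}}$ and $\|J_{-k}(C,x)\|^2=\tfrac{2^{n+k}}{(1+\|x\|^2)^{n+k}}$. Combined with the Jacobian, a short computation gives
\[
\int_{\mathbb{R}^n}\frac{\|(G_k\star h)(x)\|^2}{(1+\|x\|^2)^k}\,dx^n=2^{-k}\,\|C_k\star H\|_{L^2(S^n)}^2,\qquad
\int_{\mathbb{R}^n}\|h(x)\|^2(1+\|x\|^2)^k\,dx^n=2^{k}\,\|H\|_{L^2(S^n)}^2.
\]
Applying Theorem~\ref{thm4} to $H$ and taking square roots, the factor $2^k$ in that theorem's constant is cancelled exactly by the two factors $2^{\pm k/2}$ coming from the displays above, producing the stated constants $\frac{1}{|n(n+2)(2-n)\cdots(n+k-1)(k-1-n)|}$ and $\frac{1}{|n(n+2)(2-n)\cdots(n+k-2)(k-n)|}$ in the odd-$k$ and even-$k$ cases respectively.

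The main obstacle is purely bookkeeping: one must track with care that inversion of \eqref{eq1} swaps the roles of $J_k$ and $J_{-k}$ (pre- vs.\ post-multiplication), and then handle the two parities of $k$ uniformly, since only the odd case carries the extra Clifford factor $(x+e_{n+1})$. Once these are settled, the conformal covariance packaged into Lemma~\ref{lem1} does all the remaining work, and no new analytic ingredient beyond Theorem~\ref{thm4} is needed.
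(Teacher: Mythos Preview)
Your proposal is correct and is essentially the argument the paper has in mind. The paper does not write out a proof of Theorem~\ref{thm10}; it is stated immediately after Theorem~\ref{thm9} as the inverse-operator companion, to be obtained by the same Cayley-transform machinery that produced Theorems~\ref{thm5}--\ref{thm9} from their spherical counterparts. Your computation---defining $H$ on $S^n$ via $h(x)=J_{-k}(C,x)H(C(x))$, using the inverted intertwining relation to identify $G_k\star h$ with $J_k(C,x)(C_k\star H)(C(x))$, and then reducing both weighted integrals to $L^2(S^n)$ norms by Lemma~\ref{lem1} and the Jacobian $2^n(1+\|x\|^2)^{-n}$---is exactly the calculation implicit in the paper's remark ``$D^{-k}=J_k(C,x)^{-1}D_S^{(k)-1}J_{-k}(C,x)$'' preceding Theorem~\ref{thm6}, and the constants match after the $2^{\pm k/2}$ cancellation you describe.

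One minor remark: although you frame this as the ``alternative approach of Section~6,'' your actual computation (direct use of $\|J_{\pm k}\|$ and the Jacobian) is closer in spirit to the Section~5 proof of Theorem~\ref{thm5} than to the Proposition~\ref{prop1} isometry packaging of Section~6. The two are of course equivalent, so this does not affect correctness.
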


\section{Concluding remarks}

Let us consider the Paenitz operator on $S^{5}$. Via the Cayley
transform this operator stereographically projects to the
bi-Laplacian, $\triangle_{5}^{2}$ on $\mathbb{R}^{5}$. If we
restrict attention to the equator, $S^{4}$, of  $S^{5}$ we see
that the restriction of the Paenitz operator in this context
stereographically projects to the restriction of
$\triangle_{5}^{2}$ to $R^{4}$. This operator is the bi-Laplacian
$\triangle_{4}^{2}$ in $\mathbb{R}^{4}$, while the restriction of
the Paenitz operator on $S^{5}$ to its equator, $S^{4}$, is the
Paenitz operator on $S^{4}$. The Paenitz operator on $S^{4}$ has a
zero eigenvalue. Consequently there is no real hope of obtaining
inequalities of the type we have obtained here in $\mathbb{R}^{n}$
for the bi-Laplacian in $\mathbb{R}^{4}$. This should explain the
breakdown of the Rellich inequality, described in \cite{dh},  for
the bi-Laplacian in $\mathbb{R}^{4}$. The same rationale also
explains similar breakdowns of inequalities for $D^{k}$ in
$\mathbb{R}^{n}$ for $n$ even and $k\geq n$.

It should be clear that similar sharp $L^{2}$ inequalities can be
obtained for the operator $D_{S}+\alpha w$ provided $-\alpha$ is
not in the spectrum of $wD_{S}$. These operators conformally
transform to $D+\frac{\alpha}{1+\|x\|^{2}}$ in $\mathbb{R}^{n}$.
When $-\alpha$ is in the spectrum of $wD_{S}$ then we obtain a
f\/inite dimensional subspace of the weighted $L^{2}$ space
$L^{2}(\mathbb{R}^{n},(1+\|x\|^{2})^{-2})$, with weight
$(1+\|x\|^{2})^{-2}$, consisting of solutions to the Dirac
equation $Du+\frac{\alpha}{1+\|x\|^{2}} u=0$.

All inequalities obtained here are $L^{2}$ inequalities. It would
be nice to see similar inequalities for other suitable $L^{p}$
spaces.


\subsection*{Acknowledgements} The authors are grateful to the Royal Society for
support of this work under grant 2007/R1.

\pdfbookmark[1]{References}{ref}
\LastPageEnding
\end{document}